\newtheorem{theorem}{Theorem}
\newtheorem{lemma}[theorem]{Lemma}
\newtheorem{proposition}[theorem]{Proposition}
\newtheorem{remark}[theorem]{Remark}
\newenvironment{proof}[1][Proof]{\noindent\textbf{#1.} }{\ }
\newenvironment{keywords}[1][Keywords]{\noindent\textbf{#1:} }{}
\newenvironment{dedication}{\begin{quotation}\begin{center}\begin{em}}{\par\end{em}\end{center}\end{quotation}}
\renewcommand\appendix{\par
\setcounter{section}{0}%
\setcounter{subsection}{0}%
\setcounter{table}{0}
\setcounter{figure}{0}
\gdef\thetable{\Alph{table}}
\gdef\thefigure{\Alph{figure}}
\section*{Appendix}
\gdef\thesection{\Alph{section}}
\setcounter{section}{1}}
\begin{document}

\title{Quantum Filtering for Multiple Input Multiple Output Systems Driven by Arbitrary Zero-Mean Jointly Gaussian Input Fields\thanks{Research supported by the Australian Research Council}}
\author{Hendra~I.~Nurdin 
\thanks{Hendra I. Nurdin is with the School of Electrical Engineering and Telecommunications, UNSW Australia, 
Sydney NSW 2052, Australia. Email: h.nurdin@unsw.edu.au}}
\date{}

\maketitle \thispagestyle{empty}

\begin{dedication}
Dedicated to the memory of Slava Belavkin  
\end{dedication}

\begin{abstract}
In this paper, we treat the quantum filtering problem for multiple input multiple output (MIMO) Markovian open quantum systems coupled to multiple boson fields in an arbitrary zero-mean jointly Gaussian state, using the reference probability approach formulated by Bouten and van Handel as a quantum version of a well-known method of the same name from  classical nonlinear filtering theory, and exploiting the generalized Araki-Woods representation of Gough. This includes Gaussian field states such as vacuum, squeezed vacuum, thermal, and squeezed thermal states as special cases.  The contribution is a derivation of the general quantum filtering equation  (or stochastic master equation as they are known in the quantum optics community) in the full MIMO setup for any zero-mean jointy Gaussian input field states, up to some mild rank assumptions on certain matrices relating to the measurement vector.
\end{abstract}

\begin{keywords}
Gaussian field states, quantum filtering, stochastic master equation.
\end{keywords}

\section{Introduction}
\label{sec:intro}

Quantum filtering theory was developed starting from the late 70s by the pioneering efforts of Viacheslav ``Slava'' Belavkin, documented in a long sequence of highly original papers, see, e.g., \cite{VPB79,VPB83,VPB92a,VPB92}\footnote{For a complete list, see Belavkin's memorial homepage at the University of Nottingham, https://www.maths.nottingham.ac.uk/personal/vpb/}. The remarkable ideas developed therein, that extend key concepts from stochastic filtering and control theory for classical (i.e., non-quantum) Markovian systems to their quantum counterparts, were well ahead of their time. Indeed, they would not be implementable until the beginning of the 21st century as the technology for experimental quantum optics   advanced enough to make continuous monitoring of Markovian quantum optical systems possible. Thus, Belavkin's quantum filtering equation, the quantum analog of the Kushner-Stratonovich equation from classical nonlinear filtering theory, was independently discovered by several physicists working in quantum optics \cite{DCM92,DZR92,CARM93,HC93}, most notably through the work of Howard Carmichael within a framework known as quantum trajectory theory \cite{HC93}. In the terminology of the latter theory, Belavkin's filtering equation is known as the stochastic master equation.

The quantum filtering equations have been derived for Markovian systems coupled to various continuous-mode boson fields in specific Gaussian states (a precise definition of Gaussian states of the field will be given in Section \ref{sec:Gaussian-states}), including the vacuum state \cite{VPB92a}, squeezed vacuum state \cite{LB04}, and coherent states \cite{GK10}, typically under the measurement of only a single field. More recently, they have even been extended to highly non-Gaussian field states such as single photon states \cite{GJN13,GJNC12}, multi-photon states \cite{SZX13}, superposition of continuous-mode coherent states \cite{GJNC12}, and a class of continuous matrix product states \cite{GJN14}, using Markovian \cite{GJNC12,GJN14}, and non-Markovian embedding techniques \cite{GJN13,SZX13}. However, despite these advances, to the best of the author's knowledge, there is not yet a complete treatment of quantum filtering for systems driven by  multiple fields in  an arbitrary zero-mean jointly Gaussian field state and under arbitrary linear measurements performed on multiple outputs of the system. Here we address the problem for any zero-mean jointly Gaussian field states with arbitrary second order correlations. The quantum filtering equation has previously been derived for a system coupled to vacuum fields under so-called ``dyne measurements'' of multiple output fields\footnote{After the completion of this work, the author also became aware of the work \cite{GMS14} that treats the special case of dyne measurements with  thermal state inputs using quantum trajectory theory.} using quantum trajectory theory \cite[Section 4.5.2]{WM10}, with an application to Markovian feedback of MIMO open Markovian systems developed in \cite{CW11}.  Provided that certain mild rank conditions on certain matrices related to the measurement vector are fufilled in certain scenarios, our results give the most general form of the quantum filtering equation for multiple input multiple output (MIMO) systems coupled to multiple fields in  any zero-mean jointly Gaussian state.

The organization of this paper is as follows. In Section \ref{sec:prelim}, we give a brief review of Gaussian states of boson fields, the quantum stochastic calculus of Hudson-Parthasarathy, and the reference probability approach to quantum filtering. In Section \ref{sec:main}, we derive the main results of this paper. We begin this section by illustrating the calculations involved in the simplest case of a system coupled to a single vacuum boson field under an arbitratry linear measurement of the output field. The calculations are then extended to a system coupled to multiple vacuum fields under arbitrary linear measurements on multiple output fields. The latter results are then applied to obtain the quantum filtering equation for a system coupled to multiple boson fields in an arbitrary zero-mean jointly Gaussian state. Finally, Section \ref{sec:conclu} gives the conclusion of the paper.

\section{Preliminaries}
\label{sec:prelim}

\textbf{Notation.} $\imath=\sqrt{-1}$. If $X=[X_{jk}]$ is a matrix of Hilbert space operators or complex numbers, then $X^*$ is the adjoint of $X$, $X^{\top}=[X_{kj}]$, and $X^{\#}=[X_{jk}^*]$. $L^2([0,\infty);\mathbb{C}^n)$ denotes the Hilbert space of $\mathbb{C}^n$-valued square integrable functions on $[0,\infty)$.  Unless specified otherwise, all vectors are asumed to be represented as column vectors. $\mathscr{B}(\mathcal{H})$ denotes the class of all bounded operators on the Hilbert space $\mathcal{H}$, $\Gamma_s(\mathcal{H})$ denotes the symmetric boson Fock space over a Hilbert space $\mathcal{H}$, and $\mathcal{F}_n=\Gamma_s(L^2([0,\infty);\mathbb{C}^n))$. $\langle \cdot \rangle$ denotes quantum expectation, ${\rm Tr}(\cdot)$ denotes the trace of a trace-class operator, $\delta_{jk}$ is the Kronecker delta, and $\delta(\cdot)$ denotes  the Dirac delta function. If $X$ is a matrix of trace-class operators then ${\rm Tr}(X)=[{\rm Tr}(X_{jk})]$. If $X,Y$ are column vectors of Hilbert space operators then $[X,Y]$ denotes a matrix whose $j,k$-th entry is $[X_j,Y_k]$. $I_n$ denotes the $n \times n$ identity matrix, and $0_{m \times n}$ denotes an $m \times n$ zero matrix, however, subscripts may be dropped when the dimensions can be identified unambiguously from the context.

\subsection{Gaussian states of the field and their Fock space representations}
\label{sec:Gaussian-states}
We will consider Markovian open quantum systems that are coupled to $n$ continuous-mode boson fields indexed by $j=1,2,\ldots,n$ with annihilation field operators $b_j(t)$ satisfying the field commutation relations $[b_j(t),b_k(t')^*]=\delta_{jk} \delta(t-t')$ and $[b_j(t),b_k(t')]=0$. Let us introduce the shorthand notation,
$$
\breve{b}=\left[\begin{array}{c} b \\ b^{\#} \end{array}\right].
$$
Following the treatment in \cite{GJN10}, a zero-mean Gaussian state $\omega_{N,M}(\cdot) =\langle \cdot \rangle$ of $n$ continuous-mode boson fields can be described by the correlation function
\begin{eqnarray*}
\langle b_{j}^{\ast }(t)b_{k}(t^{\prime })\rangle =N_{jk}\;\delta
(t-t^{\prime }),\\
\langle b_{j}(t)b_{k}(t^{\prime })\rangle =M_{jk}\;\delta
(t-t^{\prime }).
\end{eqnarray*}
That is, 
\begin{equation}
\left\langle \breve{b}\left( t\right) \breve{b}^{*}\left( t^{\prime
}\right) \right\rangle \equiv F\delta \left( t-t^{\prime }\right) ,
\end{equation}
where  $F$ is an $2n \times 2n$ Hermitian matrix of the form
\begin{equation}
F=\left[
\begin{array}{cc}
I+N^{\top } & M \\ 
M^{* } & N
\end{array}
\right] 
\label{eq:F}
\end{equation}
with
$N=[N_{jk}]=N^{*}$ and $M=[M_{jk}]=M^{\top }$. By the definition of a Gaussian state, clearly $F \geq 0$ (for a Hermitian matrix, $\geq 0$ denotes that the matrix is positive semidefinite) and this entails that  $N\geq 0$ .  For the $n=1$ case, $N$ and $M$ are scalars and the positivity condition is easily seen to be $N\geq 0$ with $|M|^{2}\leq N\left( N+1\right) $. 

An important special case of a Gaussian state is the vacuum state, when $N=0$ and $M=0$. This is the state when the fields are empty (devoid of any photons). The vacuum state for the field is characterized by 
\begin{eqnarray*}
\left\langle \exp \left( \imath \int_{0}^{\infty }\breve{u}\left( t\right)^{*} \breve{%
b}\left( t\right) dt \right) \right\rangle _{\text{vac}}= \exp \left( -\frac{1}{2}%
\int_{0}^{\infty }\breve{u}\left( t\right)^{*} F_{\text{vac}}\breve{u}%
\left( t\right) dt \right),
\end{eqnarray*}
for any $u \in L^2([0,\infty);\mathbb{C}^n)$, where the superscript ${\rm vac}$ denotes vacuum. 

For convenience, we will work with so-called smeared versions of the singular field operators, namely $B_j(f)= \int_{0}^{\infty} f(s)^* b_j(s)ds$  for any $f \in L^2([0,\infty);\mathbb{C})$ and its adjoint process $B^*_j(f)=\int_{0}^{\infty} f(s) b_j(s)^*ds$ (which we will often write as $B_j(f)^*$ for notational convenience) as they are well-defined  and more regular mathematical objects, and can be manipulated using the quantum stochastic calculus of Hudson and Parthasarathy \cite{HP84,KRP92,Mey95}. They satisfy the canonical commutation relations $[B_j(f),B_k^*(g)]=\delta_{jk}\int_{0}^{\infty} f(s)^*g(s) ds$, and the concrete realization of the processes $B(f)=(B_1(f),B_2(f),\ldots,B_n(f))^{\top}$ and $B(f)^{\#}=(B_1^*(f),B_2^*(f),\ldots,B_n^*(f))^{\top}$ on a suitable Hilbert space are dependent on the state of the field. 
However, for arbitrary Gaussian states one can relate the associated realization of $B(f)$ and $B^*(f)$ to the vacuum state representation of these operators, via the so-called generalized Araki-Woods representation \cite{Gough03,HHKKR02,GJN10}. In particular, any smeared operator $B(f)$ associated with a zero-mean Gaussian state $\omega_{N,M}$, with $F \geq 0$ as given in (\ref{eq:F}), has a Fock space representation of the form
\begin{equation}
B(f) =  C_1 A_1(f) +  C_2 A_2(f) +  C_3 A_2(f)^{\#}, \label{eq:GAW}
\end{equation}
for some appropriate complex $n \times n$ matrices $C_1$,  $C_2$ and $C_3$  that are determined by the values of the parameters $N$ and $M$ of $\omega_{N,M}$, see \cite{Gough03,GJN10} for how to construct these matrices, and where $A_1$ and $A_2$ are two independent vacuum smeared annihilation processes that can each be realized on a distinct copy  of the Fock space $\mathcal{F}_n$. 
Note that $C_1,C_2,C_3$ cannot be arbitrary, but they must be such that 
the commutation relations $[B_j(f),B_k^*(g)]=\delta_{jk} \int_{0}^t f(s)^*g(s)$ hold for any $f,g \in L^2([0,\infty);\mathbb{C})$.

\subsection{Quantum stochastic calculus}
 \label{sec:QSC}
For the special case of a joint vacuum state of the fields, let us introduce the integrated field annihilation process $A_j(t)=A_j(1_{[0,t]})$ ($1_{[0,t]}$ denoting the indicator function on the set $[0,t]$) and its adjoint process, the integrated field creation process, $A_j^*(t)=A_j^*(1_{[0,t]})$. In the vacuum representation, their future-pointing It\={o} increments 
$dA_j(t)=A_j(t+dt)-A_j(t)$ and $dA_j^*(t)=A_j^*(t+dt)-A_j^*(t)$  
satisfy the quantum It\={o} table
\begin{equation*}
\begin{tabular}{l|ll}
$\times $ & $dA_{k}^{*}$ & $dA_{k}$ \\ \hline
$dA_{j}$ & $\delta_{jk} dt$ & 0 \\ 
$dA_{j}^{*}$ & 0 & 0
\end{tabular}
\end{equation*}
We may also define the counting process  (or gauge process)
\begin{equation*}
\Lambda _{jk}(t)=\int_{0}^{t}b_{j}^{\ast }(r)b_{k}(r)dr,
\end{equation*}
which may be included in the It\={o} table \cite{HP84}. The additional
non-trivial products of differentials are
\begin{equation*}
d\Lambda _{jk}dA_{l}^{*}=\delta _{kl}dA_{j}^{*},dA_{j}d\Lambda
_{kl}=\delta _{jk}dA_{l},d\Lambda _{jk}A\Lambda _{li}=\delta _{kl}d\Lambda
_{ji}.
\end{equation*}
Using the processes $A=(A_1,A_2,\ldots,A_n)^{\top}$, $A^{\#}=(A_1^*,A_2^*,\ldots,A_n^*)^{\top}$ and $\Lambda=[\Lambda_{jk}]$, one may define  quantum stochastic integrals of adapted processes on the tensor product of the system and joint Fock space of the fields. The system is the quantum mechanical object that is being coupled to the fields, and adapted means that at time $t$ the process acts trivially on the portion of the boson Fock space after time $t$ , see, e.g.,  \cite{HP84,KRP92,Mey95} for details. An adapted process commutes at time $t$ with all of the future pointing differentials.
The product of two adapted processes $X(t)$ and  $Y(t)$ is again adapted, and the increment of the product obeys the quantum It\={o} rule
$$
d(X(t)Y(t)) = (dX(t))Y(t) + X(t) dY(t) + dX(t) dY(t),
$$
Based on these quantum stochastic integrals, one may define quantum stochastic differential equations (QSDEs). An important QSDE that describes the joint unitary evolution of an open Markovian process coupled to  vacuum boson fields, common in quantum optics and related fields, is the Hudson-Parthasarathy QSDE given by
\begin{equation}
dU(t) = (-(\imath H +\nicefrac{1}{2}L^*L)dt  + dA(t)^*L - L^{*} SdA(t)+{\rm Tr}((S-I)d\Lambda(t)^{\top}))U(t), \label{eq:HP-QSDE-vac}
\end{equation}
with initial condition $U(0)=I$. The input field $A(t)$ after the interaction becomes transformed into the output field $A^{\rm out}(t) =U(t)^* A(t) U(t)$. Here $H$ represents the Hamiltonian of the system that is coupled to the field, $L$ is a column vector of coupling operators that models the coupling of the system to the annihilation process $A$, while $S$ is a unitary operator that represents the coupling of the system to the process $\Lambda$ of the field.   This QSDE has a unique solution whenever $S,L,H$ are bounded operators. Moreover, in that case the solution is guaranteed to be unitary. Since we are interested in the form of the filtering equation, to keep the exposition as concise as possible and technicalities to a minimum, we assume throughout that $S,L,H$ are bounded operators.

The non-vacuum It\={o} table can be directly constructed by exploiting the generalized Araki-Woods representation (\ref{eq:GAW}) and  the vacuum It\={o} table. Recall that $A_1(f)$ and $A_2(f)$ in (\ref{eq:GAW}) are vacuum representations on distinct copies of the Fock space $\mathcal{F}_n$. The extended It\={o} table for the integrated operators $B_j(t)=B_j(1_{[0.t]})$ and $B_j^*(t)=B_j^*(1_{[0.t]})$ when the field is in an arbitrary Gaussian state with $F$ as given in (\ref{eq:F}) is then
\begin{equation}
\begin{tabular}{l|ll}
$\times $ & $dB_{k}^{*}$ & $dB_{k}$ \\ \hline
$dB_{j}$ & $(\delta_{jk} +N_{kj}) dt$ & $M_{jk}dt$ \\ 
$dB_{j}^{*}$ & $M_{kj}^\ast dt $ & $ N_{jk} dt$.
\end{tabular} \label{tbl:non-vac}
\end{equation}
Note that in general Gaussian states the counting process $\Lambda$ need not be defined. We can also define a QSDE of the Hudson-Parthasarathy type but in which the vacuum field operators $A$ and $A^{\#}$ are replaced by field operators $B$ and $B^{\#}$  corresponding to a non-vacuum zero-mean jointly Gaussian state of the field. This yields the QSDE (without the counting process $\Lambda$),
\begin{eqnarray}
dU(t) &=&  \bigl(-(\imath H +\nicefrac{1}{2} (L^*(I+N^{\top})L^{\#}+L^{\top}NL^{\#}-L^*ML^{\#}-L^{\top}M^{\#}L))dt \notag \\
&&\qquad + dB(t)^*L -L^* dB(t) \bigr)U(t) \label{eq:HP-QSDE-nonvac1}
\end{eqnarray}
with initial condition $U(0)=I$. As with the vacuum case, after interaction with the system, $B$ is transformed to $B^{\rm out}$ according to $B^{\rm out}(t)=U(t)^* B(t) U(t)$. Using the generalised Araki-Woods representation (\ref{eq:GAW}), we can express the QSDE in terms of the vacuum operator $A(t)=[\begin{array}{cc} A_1(t)^{\top} & A_2(t)^{\top} \end{array}]^{\top}$,
\begin{eqnarray}
dU(t) &=&  (-(\imath H +\nicefrac{1}{2}L_{N,M}^*L_{N,M})dt + dA(t)^*L_{N,M} -L_{N,M}^*dA(t))U(t), \label{eq:HP-QSDE-nonvac2}
\end{eqnarray} 
with 
$$
L_{N,M}=\left[\begin{array}{cc} C_1^* L \\ C_2^*L-C_3^{\top}L^{\#} \end{array}\right].
$$ 
 
 \subsection{Reference probability approach to quantum filtering}

Suppose that the system of interest lives on a Hilbert space $\mathfrak{h}_{\rm sys}$ and has initial state $\omega_{\rm sys}(\cdot) ={\rm Tr}(\rho_{\rm sys} \cdot)$ for some density operator $\rho_{\rm sys}$ in $\mathscr{B}(\mathfrak{h}_{\rm sys})$. The system is coupled to multiple fields in a joint vacuum state so that the joint initial state of the system and fields is $\varpi(\dot)=\omega_{\rm sys} \otimes \langle \Omega| \cdot |\Omega \rangle$, with $|\Omega \rangle$ denoting the vacuum state of the fields. Thus,
\begin{eqnarray*}
\varpi(X) = {\rm Tr}( \rho_{\rm sys}\otimes |\Omega\rangle \langle \Omega | X), 
\end{eqnarray*}
for any operator $X$ in $\mathscr{B}(\mathfrak{h}_{\rm sys}) \otimes \mathscr{B}(\mathcal{F}_n)$.  We now consider the scenario where we have an arbitrary linear  measurement $Y(t)$ of a quadrature of the output field $A^{\rm out}$. That is, $Y(t)$ is an $m \times 1$ vector that is a linear combination of $A^{\rm out}(t)$ and $A^{\rm out}(t)^{\#}$,
\begin{equation}
Y(t) = G^{\#} A^{\rm out}(t) + G A^{\rm out}(t)^{\#},  \label{eq:Y-meas}
\end{equation}
with $G \in \mathbb{C}^{m \times n}$ and $m \leq n$. We require that  $[\begin{array}{cc} G^{\#} & G \end{array}]$ is full rank and satisfies
\begin{equation}
\left[ \begin{array}{cc} G^{\#} & G \end{array}\right] \mathbb{K}_n \left[ \begin{array}{c} G^{*} \\ G^{\top} \end{array}\right]=0, \label{eq:G-commutation}
\end{equation}
with
$$
\mathbb{K}_n =\left[\begin{array}{cc}  0 & I_n \\ -I_n & 0 \end{array}  
\right].
$$
Note that the full rank requirement on $[\begin{array}{cc} G^{\#} & G \end{array}]$ entails no loss of generality since  $m \leq n$ and $[\begin{array}{cc} G^{\#} & G \end{array}]$ not being full rank implies that there is redundancy (linear dependence) in the measurement vector $Y$ that can be removed to reduce $[\begin{array}{cc} G^{\#} & G \end{array}]$ to the full rank case. We now comment on the following {\em subtle} point. Observe that it is not obvious that $G$ is guaranteed to be full rank even if $[\begin{array}{cc} G^{\#} & G\end{array}]$ does have this property, unless $G$ is a real matrix, $G^{\#}=G$, or $m=1$. 
\begin{remark}
\label{rem:pathology} From this point onwards, we enforce the assumption that $G$ is full rank. Since  $[\begin{array}{cc} G^{\#} & G \end{array}]$ is already assumed to be full rank, it seems reasonable to expect that, at least generically, $G$ would also be full rank. The full rankness of $G$ will play a crucial role in the  derivations in later sections.
\end{remark}

Let $j_t(X)=U(t)^* X U(t)$ denote the evolution of $X$ in the Heisenberg picture for any $X \in \mathscr{B}(\mathfrak{h}_{\rm sys})$. Then we have
\begin{lemma}
\label{lem:non-demo-X}
Let $G$ satisfy (\ref{eq:G-commutation}). Then the measurement $Y(t)$ satisfies $[Y(t),Y(s)]=0$ for all $s,t \geq 0$ and  $[j_t(X),Y(s)]=0$ for any system operator $X$ and any $0 \leq s \leq t$.
\end{lemma}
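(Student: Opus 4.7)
The plan is to exploit two standard facts about the Hudson--Parthasarathy cocycle: (i) the output field inherits the canonical commutation relations of the input field from unitarity of $U$, and (ii) for $s\leq t$ the output operator $A^{\mathrm{out}}(s)$ can be represented using $U(t)$ rather than $U(s)$, which is the key ``non-demolition'' identity. More precisely, writing $U(t)=U_{t,s}U(s)$ where the forward cocycle $U_{t,s}$ acts non-trivially only on the system and on the future Fock factor $\mathcal{F}([s,t))$, the operator $A(s)$ commutes with $U_{t,s}$, and therefore
\[
A^{\mathrm{out}}(s) \;=\; U(s)^{*}A(s)U(s) \;=\; U(t)^{*} A(s) U(t), \qquad s\leq t.
\]

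For the second claim, assume $0\leq s\leq t$. Substituting the identity above into the definition of $Y(s)$ gives
\[
Y(s) \;=\; U(t)^{*}\bigl(G^{\#}A(s)+G A(s)^{\#}\bigr)U(t),
\]
so that
\[
[\,j_t(X),Y(s)\,] \;=\; U(t)^{*}\bigl[\,X,\; G^{\#}A(s)+G A(s)^{\#}\,\bigr]U(t).
\]
Since $X\in\mathscr{B}(\mathfrak{h}_{\mathrm{sys}})$ acts trivially on Fock space while $A(s), A(s)^{*}$ act trivially on the system, the inner commutator vanishes. Note this half does \emph{not} require the commutation condition (\ref{eq:G-commutation}).

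For the first claim, by antisymmetry assume $s\leq t$, and again apply the identity to rewrite both entries of $[Y(t),Y(s)]$ with a common conjugating $U(t)$. This reduces the problem to computing $[G^{\#}A(t)+GA(t)^{\#},\,G^{\#}A(s)+GA(s)^{\#}]$ using the vacuum CCR $[A_j(t),A_k(s)^{*}]=\delta_{jk}\min(t,s)\,I$ and $[A_j(t),A_k(s)]=0$. Expanding the four cross-terms, the two ``same-type'' brackets vanish and the two ``mixed'' brackets contribute
\[
s\bigl(G^{\#}G^{\top}-G G^{*}\bigr)\,I.
\]
On the other hand, expanding the left-hand side of (\ref{eq:G-commutation}) using the explicit form of $\mathbb{K}_n$ gives exactly $G^{\#}G^{\top}-G G^{*}=0$, so $[Y(t),Y(s)]=0$.

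The only real obstacle is the justification of the ``shift of conjugator'' step $U(t)^{*}A(s)U(t)=U(s)^{*}A(s)U(s)$; this is where adaptedness of the QSDE solution (\ref{eq:HP-QSDE-vac}) and the cocycle factorization $U(t)=U_{t,s}U(s)$ enter, together with the observation that $U_{t,s}$ depends only on the Fock factor over $[s,t)$ and hence commutes with the past-supported operators $A(s)$ and $A(s)^{*}$. Once this identity is in hand, both assertions are short algebraic calculations, with condition (\ref{eq:G-commutation}) being used precisely to kill the residual scalar $G^{\#}G^{\top}-G G^{*}$ in the self-commutator of $Y$.
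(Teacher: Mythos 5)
Your argument is correct and follows essentially the same route as the paper's proof: the paper likewise invokes the cocycle/non-demolition property to reduce both commutators to input-field canonical commutation relations, with condition (\ref{eq:G-commutation}) (equivalently $G^{\#}G^{\top}=GG^{*}$) killing the residual term in $[Y(t),Y(s)]$. You merely make explicit the ``shift of conjugator'' identity and the resulting algebra that the paper leaves as a direct calculation.
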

\begin{proof}
Recall that $[A(t),A(s)]=0=[A^{\rm out}(t),A^{\rm out}(s)]$, $[A(t),A(s)^{\#}]=\min(t,s)I_n =[A^{\rm out}(t),A^{\rm out}(s)^{\#}]$ for all $s,t \geq 0$. Using these properties, direct calculation then shows that the condition on $G$ implies that $[Y(t),Y(s)]=0$ for all $0 \leq s \leq t$. Also, recall that $[j_t(X),A^{\rm out}(s)]=0=[j_t(X),A^{\rm out}(s)^{\#}]$ for all $0 \leq s \leq t$ by the cocycle property of the solution of the Hudson-Parthasarathy QSDE. Using these properties and the definition of $Y(t)$, direct calculation  shows that  $[j_t(X),Y(s)]=0$ for all $0 \leq s \leq t$.  
\hfill $\Box$
\end{proof} 

For any system operator $X$, the quantum filtering problem is to find an optimal mean-square estimate of $j_t(X)$ based on the observation of $Y$ up to time $t$: $\{Y(s);\, 0\leq s \leq t\}$. Let $\mathscr{Y}_t$ be the commutative von Neumann algebra generated by (the spectral projections of) the elements of $\{Y(s);\, 0\leq s \leq t\}$. To obtain an optimal mean-square estimate of $j_t(X)$, our goal is to derive a stochastic differential equation, the quantum filtering equation, for the quantity
$
\pi_t(X) = \varpi(j_t(X) | \mathscr{Y}_t),
$  
where $\varpi(j_t(X) | \mathscr{Y}_t)$ denotes the quantum conditional expectation of $j_t(X)$ on $\mathscr{Y}_t$ with respect to the state $\varpi$. This is a well-defined quantity since $[j_t(X),Y(s)]=0$ for all $0 \leq s \leq t$, and furnishes an optimal mean-square estimate of $X(t)$ given $\mathscr{Y}_t$.

Introduce a process $Z$ in an analogous way to $Y$ as  
\begin{equation}
Z(t) =  G^{\#} A(t) + G A(t)^{\#},  \label{eq:Z-meas-vac}
\end{equation}
where $A^{\rm out}$ in the definition of $Y$ has been replaced by $A$. Then by property (\ref{eq:G-commutation}) we also have that $[Z(t),Z(s)]=0$ for all $0 \leq s \leq t$. Denote the commutative von Neumann algebra generated by (the spectral projections of) the elements of $\{Z(s);\, 0\leq s \leq t\}$ by $\mathscr{Z}_t$. We are now ready to explain how to derive quantum filtering equations using the quantum reference probability approach introduced by Bouten and van Handel \cite{BvH06,BvH08} as a quantum adaptation of the reference probability approach from classical nonlinear stochastic filtering theory. In the quantum context, the basis for this approach is the following theorem  \cite{BvHJ07}

\begin{theorem}
Let $U(t)$ be the unitary defined by the QSDE (\ref{eq:HP-QSDE-vac}) and let $Q_t(\cdot)$ be a time-dependent state on the joint system and field Fock space defined by
$$
Q_t(A)= \varpi(U(t)^* A U(t)),
$$
for all $A \in \mathscr{B}(\mathfrak{h}_{\rm sys}) \otimes \mathscr{B}(\mathcal{F}_n)$ then
$$
\pi_t(X)=U(t)^* Q_t(X | \mathscr{Z}_{t}) U(t).  
$$
Moreover, if there is an adapted process $V(t)$ which is the solution of a QSDE such that $V(t) \in \mathscr{Z}_{t}$ for all $t$ and satisfies the identity
$$
Q_t(X) = \varpi(V(t)^*X V(t))
$$
for all $X$ in $\mathscr{B}(\mathfrak{h}_{\rm sys})$ and all $t \geq 0$, then
$$
\pi_t(X)=\frac{\sigma_t(X)}{\sigma_t(I)},.
$$
where
$\sigma_t(X) = U(t)^*\varpi(V(t)^* X V(t) | \mathscr{Z}_{t})U(t)$ all $X$ in $\mathscr{B}(\mathfrak{h}_{\rm sys})$.
\end{theorem}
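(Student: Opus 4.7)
My plan is to first establish that $\mathscr{Y}_t$ and $\mathscr{Z}_t$ are unitarily conjugate via $U(t)$, then transport the conditional expectation across this conjugation, and finally handle the normalized form by a quantum Bayes (change-of-state) argument.

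First I would invoke the cocycle property of the Hudson--Parthasarathy solution: for $0\le s\le t$, in the factorization $U(t)=U(s,t)U(s)$ the increment $U(s,t)$ acts non-trivially only on the system and on the Fock components \emph{after} time $s$, so it commutes with $A(s)$ and $A(s)^{\#}$. Hence $A^{\rm out}(s)=U(s)^{*}A(s)U(s)=U(t)^{*}A(s)U(t)$ for $s\le t$, which gives $Y(s)=U(t)^{*}Z(s)U(t)$ and therefore the crucial identification $\mathscr{Y}_t=U(t)^{*}\mathscr{Z}_t U(t)$ as commutative von Neumann algebras. To prove $\pi_t(X)=U(t)^{*}Q_t(X\mid\mathscr{Z}_t)U(t)$, I would then check the defining property of the conditional expectation: every $K\in\mathscr{Y}_t$ has the form $K=U(t)^{*}K'U(t)$ with $K'\in\mathscr{Z}_t$, and a short computation gives
\begin{equation*}
\varpi(j_t(X)K)=\varpi\bigl(U(t)^{*}XK'U(t)\bigr)=Q_t(XK')=\varpi\bigl(U(t)^{*}Q_t(X\mid\mathscr{Z}_t)U(t)\cdot K\bigr),
\end{equation*}
so uniqueness of conditional expectations closes this part.

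For the normalized statement, I would first extend the hypothesis $Q_t(Y)=\varpi(V(t)^{*}YV(t))$ from $Y\in\mathscr{B}(\mathfrak{h}_{\rm sys})$ to $Y=XK'$ with $K'\in\mathscr{Z}_t$; this is immediate because $V(t)\in\mathscr{Z}_t$ commutes with $K'$, $X$ commutes with all of $\mathscr{Z}_t$ (being a system operator), and both sides define normal states on $\mathscr{B}(\mathfrak{h}_{\rm sys})\otimes\mathscr{Z}_t$. The quantum Bayes step is then the tower/take-out identity
\begin{equation*}
\varpi(V(t)^{*}XV(t)K')=\varpi\bigl(\varpi(V(t)^{*}XV(t)\mid\mathscr{Z}_t)\,K'\bigr),
\end{equation*}
valid because $V(t)^{*}XV(t)$ commutes with $\mathscr{Z}_t$. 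Applying this with $X$ and then with $X=I$, and using uniqueness of the conditional expectation, yields
\begin{equation*}
Q_t(X\mid\mathscr{Z}_t)=\frac{\varpi(V(t)^{*}XV(t)\mid\mathscr{Z}_t)}{\varpi(V(t)^{*}V(t)\mid\mathscr{Z}_t)};
\end{equation*}
conjugating by $U(t)$ and invoking the first part then gives $\pi_t(X)=\sigma_t(X)/\sigma_t(I)$.

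The main obstacle I anticipate is not the algebra itself but making the non-commutative take-out/tower manipulations rigorous: one must know that $\mathscr{Z}_t$ sits in the multiplicative domain of the conditional expectation $\varpi(\cdot\mid\mathscr{Z}_t)$ and that $V(t)^{*}XV(t)$ genuinely commutes with every $K'\in\mathscr{Z}_t$. Both reductions ultimately rely on $V(t)\in\mathscr{Z}_t$ and on $X$ acting trivially on Fock space, so once the cocycle-based identification $\mathscr{Y}_t=U(t)^{*}\mathscr{Z}_t U(t)$ from the first step is in hand, the remainder of the argument is routine.
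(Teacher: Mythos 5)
The paper does not actually prove this theorem: it is imported from the reference-probability framework of Bouten, van Handel and James \cite{BvHJ07}, so there is no in-paper proof to compare against. Your reconstruction follows the standard argument of that reference: the cocycle property gives $Y(s)=U(t)^{*}Z(s)U(t)$ for $s\le t$, hence $\mathscr{Y}_t=U(t)^{*}\mathscr{Z}_tU(t)$; the first identity follows by verifying the defining property of the conditional expectation against arbitrary $K\in\mathscr{Y}_t$ and invoking uniqueness; and the normalized form is the quantum Kallianpur--Striebel/Bayes formula. That overall architecture is correct and is exactly how the cited source proceeds.

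The one step you should not wave through is the ``immediate'' extension of $Q_t(Y)=\varpi(V(t)^{*}YV(t))$ from $Y\in\mathscr{B}(\mathfrak{h}_{\rm sys})$ to $Y=XK'$ with $K'\in\mathscr{Z}_t$. The reasons you give (commutation of $V(t)$ with $K'$, of $X$ with $\mathscr{Z}_t$, normality of both states) do not yield this: two normal states on the algebra generated by $\mathscr{B}(\mathfrak{h}_{\rm sys})$ and $\mathscr{Z}_t$ can agree on every system operator and still differ on products $XK'$. Both the numerator and the denominator manipulations in your Bayes step need the stronger identity (the denominator step uses it with $X=I$ and $K''\in\mathscr{Z}_t$). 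In \cite{BvHJ07} the hypothesis is effectively that $Q_t$ and $\varpi(V(t)^{*}\cdot V(t))$ coincide on that larger algebra, and in the present paper the stronger identity is what the Holevo trick in Proposition \ref{pro:single-V} actually delivers: $U(t)|\eta\Omega\rangle=V(t)|\eta\Omega\rangle$ gives $\varpi(U(t)^{*}WU(t))=\varpi(V(t)^{*}WV(t))$ for \emph{all} operators $W$, not just system operators. So either strengthen the hypothesis in your statement, or note explicitly that the construction of $V$ supplies the stronger form; as written, that link in the chain is unsupported. A minor additional caveat: the division by $\sigma_t(I)$ requires $\varpi(V(t)^{*}V(t)\mid\mathscr{Z}_t)$ to be almost surely invertible, which again comes from the explicit construction of $V$ rather than from the stated hypotheses.
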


The crux of the approach is to construct a process $V(t)$ that satisfies the conditions of the theorem. We will do this for the quantum filtering problem of interest in this paper in the next section. Since $\sigma_t(X), j_t(X)\in \mathscr{Y}_t$, they are isomorphic to  classical stochastic processes on a common probability space, and it is possible to write down a classical stochastic differential equation (SDE) for the increment of $\sigma_t(X)$ and $\pi_t(X)$. The SDE for $\sigma_t(X)$ is known as the quantum Zakai equation while that for $\sigma_t(X)$ is called the quantum Kushner-Stratonovich equation, and are quantum counterparts of the SDEs of the same name appearing in classical stochastic filtering theory. The quantum Kushner-Stratonovich equation is also known as the Belavkin master equation, being first obtained by Slava Belavkin.

\section{Main results}
\label{sec:main} 
In this section, using the reference probability approach described in the previous section, we will derive the quantum filtering equation for systems driven by fields in an arbitary zero-mean jointly Gaussian state with arbitrary linear measurements performed on its outputs. However, to fix the main ideas and simplify the subsequent exposition, we first review the simple case of a system driven by a single input field in the vacuum state with an arbitrary linear measurement performed on its output field. Then the result is extended to systems driven by multiple vacuum fields with arbitrary linear measurements at its output before finally being applied to systems driven by fields in any zero-mean jointly Gaussian state.

\subsection{Quantum filtering of a system coupled to a single vacuum field with an arbitrary linear measurement on its output}

Consider the Hudson-Parthasarathy QSDE driven by a single field in a vacuum state,
\begin{equation}
dU(t) = (-\imath (H+\nicefrac{1}{2}L^*L) dt +L dA(t)^* -L^* SdA(t) + (S-I)d\Lambda(t))U(t),\; U(0)=I, \label{eq:HP-QSDE-single}
\end{equation}
and let $A^{\rm out}(t)=U(t)^*A(t)U(t)=j_t(L)+j_t(S)dA(t)$. We have a measurement of the form,
\begin{equation}
Y(t) = g^*A^{\rm out}(t) + g A^{\rm out}(t)^*, \label{eq:Y-meas-single-vac}
\end{equation}
for some complex number $g \neq 0$, and thus $Z(t)=g^* A(t) + g A(t)^*$.  We have the following proposition.
\begin{proposition}
\label{pro:single-V} Let $V(t)$ be an adapted process defined as the solution to the QSDE
$$
dV(t) = \left(-\bigl(\imath H+\nicefrac{1}{2}L^*L \bigr)dt + (L/g) dZ(t)\right)V(t),
$$
with $V(0)=I$.
Then
$$
\varpi(U(t)^* X U(t)) = \varpi(V(t)^* X V(t))
$$
for all $X \in \mathscr{B}(\mathfrak{h}_{\rm sys})$ and all $t \geq 0$.
\end{proposition}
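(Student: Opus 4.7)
The plan is to establish the stronger vector-level identity $U(t)(|\psi\rangle\otimes|\Omega\rangle) = V(t)(|\psi\rangle\otimes|\Omega\rangle)$ for every $|\psi\rangle\in\mathfrak{h}_{\rm sys}$, where $|\Omega\rangle$ is the vacuum vector of $\mathcal{F}_1$. Once this is known, writing $\rho_{\rm sys} = \sum_k p_k |\psi_k\rangle\langle\psi_k|$ and using that any $X\in\mathscr{B}(\mathfrak{h}_{\rm sys})$ acts as $X\otimes I$ on the joint Hilbert space, the matrix elements $\langle\psi_k\otimes\Omega|U(t)^*XU(t)|\psi_k\otimes\Omega\rangle$ and $\langle\psi_k\otimes\Omega|V(t)^*XV(t)|\psi_k\otimes\Omega\rangle$ coincide, and summing over $k$ with weights $p_k$ yields $\varpi(U(t)^*XU(t))=\varpi(V(t)^*XV(t))$, as required.

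To prove the vector identity, I would compute the QSDEs satisfied by $\eta_U(t)=U(t)(|\psi\rangle\otimes|\Omega\rangle)$ and $\eta_V(t)=V(t)(|\psi\rangle\otimes|\Omega\rangle)$. By adaptedness, each of these vectors lies in $\mathfrak{h}_{\rm sys}\otimes\Gamma_s(L^2([0,t];\mathbb{C}))\otimes|\Omega_{(t,\infty)}\rangle$, so the future-pointing It\^o increments $dA(t)$ and $d\Lambda(t)$ annihilate them. Substituting this into (\ref{eq:HP-QSDE-single}) kills the $dA(t)$ and $d\Lambda(t)$ terms in the evolution of $\eta_U(t)$, and expanding $(L/g)\,dZ(t) = (Lg^*/g)\,dA(t) + L\,dA(t)^*$ kills the $dA(t)$ piece in the evolution of $\eta_V(t)$. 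Both vectors are therefore governed by the same QSDE
$$
d\eta(t) = \bigl(-(\imath H + \tfrac{1}{2}L^*L)\,dt + L\,dA(t)^*\bigr)\eta(t),\qquad \eta(0)=|\psi\rangle\otimes|\Omega\rangle,
$$
and since $H$ and $L$ are bounded this QSDE admits a unique solution, whence $\eta_U(t)=\eta_V(t)$.

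The only step needing careful justification is the vacuum-annihilation property of the future-pointing increments on adapted-times-vacuum vectors. This is a standard fact of the Hudson-Parthasarathy calculus, reflecting the Fock factorization across time $t$ together with $b_j(s)|\Omega\rangle=0$ for $s>t$, and I would invoke it rather than reprove it. Note that the coefficient $g^*/g$ in front of $dA(t)$ in the $V$-equation is immaterial precisely because it multiplies a term that vanishes against the vacuum; this is the algebraic reason underlying the particular choice of driving differential $(L/g)\,dZ(t)$ in the definition of $V(t)$, namely that it matches the $L\,dA(t)^*$ creation term of $U$ exactly while leaving only an annihilation component that is invisible to $|\Omega\rangle$.
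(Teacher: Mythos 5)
Your proposal is correct and follows essentially the same route as the paper: the paper also reduces the claim to the vector identity $U(t)|\eta\Omega\rangle=V(t)|\eta\Omega\rangle$, which it justifies by citing the ``Holevo trick'' of \cite[Lemma 6.2]{BvH06} rather than spelling out, as you do, that the $dA(t)$ and $d\Lambda(t)$ increments annihilate adapted-times-future-vacuum vectors so that both QSDEs collapse to the same creation-driven equation. Your observation that the choice $(L/g)\,dZ(t)$ is designed precisely to reproduce the $L\,dA(t)^*$ term modulo vacuum-invisible annihilation terms is exactly the mechanism the cited lemma encapsulates.
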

\begin{proof}
Let $|\eta \rangle$ be a pure state on $\mathfrak{h}_{\rm sys}$. Since $A(t) |\Omega \rangle =0$ and $\Lambda(t)| \Omega \rangle =0$, we have that $U(t) |\eta \Omega \rangle=V(t)|\eta \Omega \rangle$ for all $|\eta \rangle \in \mathfrak{h}_{\rm sys}$ by a trick that is attributed in \cite{BvH06} to Holevo, see, e.g., \cite[Lemma 6.2]{BvH06}, for a proof. It then follows by inspection that $U(t) \rho_{\rm sys} \otimes |\Omega \rangle \langle \Omega |U(t)^* = V(t) \rho_{\rm sys} \otimes |\Omega \rangle \langle \Omega |V(t)^*$ for all initial system density operators $\rho_{\rm sys}$, and therefore
$ 
\varpi(U(t)^* X U(t)) = \varpi(V(t)^* X V(t))
$ 
for all $X \in \mathscr{B}(\mathfrak{h}_{\rm sys})$. \hfill $\Box$
\end{proof}

Introduce the Linbladian superoperator  $\mathcal{L}_{H,L}$ as
$$
\mathcal{L}_{H,L}(X)=-\imath [X,H] +L^*XL-\nicefrac{1}{2}(L^*LX+XL^*L),
$$
for any $X \in \mathscr{B}(\mathfrak{h}_{\rm sys})$. By a straightforward application of the quantum It\={o} rules we have the following theorem.
\begin{theorem}
\label{thm:filter-single-in} The quantum Zakai equation for the system (\ref{eq:HP-QSDE-single}) with measurement $Y$ given by (\ref{eq:Y-meas-single-vac}) is
\begin{eqnarray*}
d\sigma_t(X) =\sigma_t(\mathcal{L}_{H,L}(X))dt+\sigma_t(XL/g+L^*X/g^*)dY(t),
\end{eqnarray*}
and the quantum Kushner-Stratonovich equation for the conditional expectation $\pi_t(X)$ is
\begin{eqnarray*}
d\pi_t(X) &=& \pi_t(\mathcal{L}_{H,L}(X))dt +\bigl(\pi_t(g^*XL +gL^*X) -\pi_t(X)\pi_t(g^*L +g L^*)\bigr)|g|^{-2} d\nu(t),
\end{eqnarray*}
where $\nu$ is the innovations process and is a $\mathscr{Y}_t$-martingale defined by
$$
\nu(t)=\int_{0}^t (dY(s)-\pi_s(g^*L+gL^*)ds).
$$
Moreover, $\nu$ is a Wiener process.
\end{theorem}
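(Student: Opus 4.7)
The plan is to follow the four-step template of the reference probability method: derive the quantum Itô differential of $V(t)^* X V(t)$, take the vacuum-state conditional expectation with respect to $\mathscr{Z}_t$, conjugate by $U(t)$ to transfer to $\mathscr{Y}_t$, and finally normalize to obtain the nonlinear filter. For the first step, using the QSDE of Proposition~\ref{pro:single-V} and the fact that $(dZ)^2 = |g|^2\,dt$ (since $dZ = g^* dA + g\,dA^*$ and $dA\,dA^* = dt$ in the vacuum Itô table, with all other products vanishing), I would expand $d(V^*XV) = (dV^*)XV + V^*X(dV) + (dV^*)X(dV)$. The $dt$ contributions assemble into the Lindbladian $\mathcal{L}_{H,L}(X)$, with the $L^*XL$ piece arising entirely from the $(dZ)^2$ cross term, while the diffusion term leaves $V^*(XL/g + L^*X/g^*)V$, yielding
$$
d\bigl(V(t)^*XV(t)\bigr) = V(t)^*\mathcal{L}_{H,L}(X)V(t)\,dt + V(t)^*\bigl(XL/g + L^*X/g^*\bigr)V(t)\,dZ(t).
$$

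Next, I would apply $\varpi(\,\cdot\,|\mathscr{Z}_t)$ to both sides. This is legitimate because $V(t)$ is adapted and, under $\varpi$, $Z$ acts as a Wiener process of variance $|g|^2 t$ on the relevant commutative subalgebra, so conditional expectation commutes with the stochastic integral against $dZ$. Conjugating by $U(t)$ and invoking the isomorphism $U(t)^* \mathscr{Z}_t U(t) = \mathscr{Y}_t$, which sends $Z(s) \mapsto Y(s)$ for $s \leq t$, converts $dZ$ into $dY$ and gives the stated Zakai equation. For the Kushner--Stratonovich equation I would then apply the classical Itô quotient rule to $\pi_t(X) = \sigma_t(X)/\sigma_t(I)$, valid since both numerator and denominator lie in the commutative algebra $\mathscr{Y}_t$. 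Here I use $(dY)^2 = |g|^2\,dt$ (inherited from $(dZ)^2$), the identity $\mathcal{L}_{H,L}(I) = 0$ which kills the drift of $\sigma_t(I)$, and the substitution $dY = |g|^{-2}\pi_t(g^*L + gL^*)\,dt + d\nu(t)$; after this substitution the quadratic-covariation cross terms cancel against the drift correction to leave exactly the stated $d\nu$-coefficient.

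For the claim that $\nu$ is Wiener, I would invoke Lévy's characterization. By construction $\nu$ is continuous and $\mathscr{Y}_t$-adapted. It is a martingale because, writing $dA^{\rm out}(t) = j_t(L)\,dt + j_t(S)\,dA(t)$ and its adjoint, the drift of $dY$ under $\varpi(\,\cdot\,|\mathscr{Y}_t)$ is precisely $\pi_t(g^*L + gL^*)\,dt$, which is what has been subtracted off. The quadratic variation is $(d\nu)^2 = (dY)^2 = |g|^2 j_t(SS^*)\,dt = |g|^2\,dt$, using the unitarity $SS^* = I$. Hence $\nu$ is a Wiener process of variance $|g|^2 t$.

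The main obstacle is bookkeeping rather than conceptual: keeping the $g$ and $g^*$ factors in the correct positions, and verifying that every quadratic-covariation cross term produced by the quotient rule is exactly consumed by the drift correction coming from the innovations substitution. The one nontrivial appeal is to the Bouten--van Handel framework to justify the commutation of quantum conditional expectation with the stochastic integral against $dZ$ for adapted integrands and the identification of $\sigma_t(X)$ and $\pi_t(X)$ with classical semimartingales on the spectral probability space of $\mathscr{Y}_t$.
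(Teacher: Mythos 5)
Your proposal is correct and follows essentially the same reference-probability route as the paper: quantum It\=o differentiation of $V(t)^*XV(t)$ using $(dZ)^2=|g|^2dt$, conditioning on $\mathscr{Z}_t$, conjugation by $U(t)$, the classical It\=o quotient rule for the normalization, and the martingale/Wiener property of $\nu$ established as in Bouten--van Handel--James. The only cosmetic difference is that where you invoke the $*$-isomorphism $U(t)^*\mathscr{Z}_tU(t)=\mathscr{Y}_t$ to convert $dZ$ into $dY$, the paper instead expands $d\bigl(U(t)^*\varpi(V(t)^*XV(t)\mid\mathscr{Z}_t)U(t)\bigr)$ term by term via the quantum It\=o rule and checks that the $dA$, $dA^*$ and $d\Lambda$ contributions recombine into $dY(t)$ (the $d\Lambda$ terms vanishing); the two arguments are equivalent.
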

\begin{proof}
First note that by the quantum It\={o} rule (see \cite{BvH06,BvH08} for a justification of the manipulations involved)
\begin{eqnarray*}
\varpi(V(t)^*XV(t)\bigr| \mathscr{Z}_t)&=& \int_{0}^t   \varpi(dV(t)^*XV(t) + V(t)^* X dV(t) + dV(t)^*XdV(t)\bigr| \mathscr{Z}_{t}),\\
&=& \varpi\left( \int_{0}^t(dV(t)^*XV(t) + V(t)^* X dV(t) + dV(t)^*XdV(t) \bigr| \mathscr{Z}_{t}\right),\\
&=& \varpi\left( \int_{0}^t(V(t)^*\mathcal{L}_{H,L}(X)V(t) +V(t)^*( XL/g+L^*X/g^*)V(t) dZ_{t} \biggr| \mathscr{Z}_{t}\right),\\
&=& \int_{0}^t \left(\varpi(V(s)^*\mathcal{L}_{H,L}(X)V(s) | \mathscr{Z}_{s}\right.)ds \\
&&\quad \left. + \varpi \bigl(V(s)^*\bigl( XL/g+L^*X/g^*\bigr)V(s) \bigr| \mathscr{Z}_{s} \bigr)dZ_{s} \right),
\end{eqnarray*}
so that
\begin{eqnarray*}
d\varpi(V(t)^*XV(t)| \mathscr{Z}_t) &=&  \varpi(V(t)^*\mathcal{L}_{H,L}(X)V(t) \mid \mathscr{Z}_{t})dt \\
&&\quad + \varpi \left(V(t)^*( XL/g +L^*X/g^*)V(t) | \mathscr{Z}_{t}\right)dZ_{t}.
\end{eqnarray*}
Using the above expression for $d\varpi(V(t)^*XV(t)\mid \mathscr{Z}_t)$ together with the quantum It\={o} rule, and the fact that $\sigma_t(X)$ being in $\mathscr{Y}_{t}$ commutes with $j_t(X) \in \mathscr{Y}_{t}$ for any system operator $X$, yields
\begin{eqnarray*}
d\sigma_t(X) &=& d(U(t)^* \varpi(V(t)^*XV(t)\mid \mathscr{Z}_t) U(t)),\\
&=& (dU(t)^*) \varpi(V(t)^*XV(t) \mid \mathscr{Z}_t) U(t))+U(t)^*(d \varpi(V(t)^*XV(t))\mid \mathscr{Z}_t) U(t)) \\
&&\quad + U(t)^* \varpi(V(t)^*XV(t)\mid \mathscr{Z}_t) dU(t) + dU(t)^*( d\varpi(V(t)^*XV(t))\mid \mathscr{Z}_t) U(t) \\
&&\quad + dU(t)^* \varpi(V(t)^*XV(t)\mid \mathscr{Z}_t) dU(t)+ U(t)^*d \varpi(V(t)^*XV(t)\mid \mathscr{Z}_t) dU(t) \\
&&\quad+ dU(t)^* d\varpi(V(t)^*XV(t)\mid \mathscr{Z}_t) dU(t),\\
&=&( \sigma_t(\mathcal{L}(X)) + g j_t(L^*)\sigma_t(XL/g+L^*X/g) + g^* j_t(L)\sigma_t(XL/g+L^*X/g)  )dt \\
&&\qquad + gj_t(S^*)\sigma_t(XL/g+L^*X/g)dA(t)^* +g^*j_t(S)\sigma_t(XL/g+L^*X/g)dA(t), \\
&=& \sigma_t(\mathcal{L}(X))dt+\sigma_t(XL/g+L^*X/g^*)dY(t).
\end{eqnarray*}
Note in particular that the quantum Zakai equation has no terms involving $d\Lambda(t)$ as they vanish in the calculations. 

With the quantum Zakai equation in hand it is a straightforward but tedious task to compute the quantum Kushner-Stratonovich QSDE again by straightforward application of the classical It\={o} rule. Recall that since $\sigma_t(X)$ and $\sigma_t(I)$ are processes in the commutative von Neumann algebra $\mathscr{Y}_t$ and are isomorphic to two classical stochastic processes that can be realized on the same classical probability space, the differential of the quotient can be computed with the classical It\={o} rule,
\begin{eqnarray*}
d\pi_t(X) &=& d\left(\frac{\sigma_t(X)}{\sigma_t(I)}\right),\\
&=& \frac{d\sigma_t(X)}{\sigma_t(I)}+\sigma_t(X)\left(-\frac{d\sigma_t(I)}{\sigma_t(I)^2} +\frac{d\sigma_t(I)^2}{\sigma_t(I)^3}\right)-d\sigma_t(X)\left(-\frac{d\sigma_t(I)}{\sigma_t(I)^2} +\frac{d\sigma_t(I)^2}{\sigma_t(I)^3}\right),\\
&=& \left(\pi_t(\mathcal{L}_{H,L}(X)) -|g|^2 \pi_t(XL/g+L^*X/g^*)\pi_t(L/g+L^*/g^*) \right. \\
&&\quad \left. +|g|^2 \pi_t(X)\pi_t(L/g+L^*/g^*)^2 \right)dt\\
&&\quad +(\pi_t(XL/g+L^*X/g^*)-\pi_t(X)\pi_t(L/g+L^*/g^*)dY(t),\\
&=&  \pi_t(\mathcal{L}_{H,L}(X))+(\pi_t(g^*XL+L^*X^*g)-\pi_t(X)\pi_t(g^*L+L^*g))|g|^{-2}d\nu(t),
\end{eqnarray*}
where $\nu$ is the innovations process of the filter as defined in the theorem. That $\nu(t)$ is a $\mathscr{Y}_t$-martingale and a Wiener process follows analogously from the proof of \cite[Theorem 7.1]{BvHJ07}.
\hfill $\Box$
\end{proof}

\subsection{Quantum filtering of a system coupled to a multiple vacuum fields with arbitrary linear measurements on its outputs}
\label{sec:filtering-multi-vac}
In this section, we turn to deriving the quantum filtering equation for a MIMO system driven by multiple fields in a joint vacuum state with arbitrary simultaneous linear measurements performed on multiple outputs of the system. The results of this section can then be applied immediately  to address systems driven by multiple fields in an arbitrary zero-mean jointly Gaussian state. The Hudson-Parthasarathy QSDE is as given in (\ref{eq:HP-QSDE-vac}) and the output fields are elements of the vector $A^{\rm out}(t)=U(t)^*A(t)U(t)$. The measurement will be an $m  \times 1$ vector of the form (\ref{eq:Y-meas}) with $G$ satisfying (\ref{eq:G-commutation}). 
The following lemma will be useful in the subsequent development.
\begin{lemma}
\label{lem:Xi-construct} 
If $[\begin{array}{cc} G^{\#} & G \end{array}]$ is full rank and $G$ satisfies (\ref{eq:G-commutation}) then if $m < n$
there exists a matrix $H \in \mathbb{C}^{(n-m) \times n}$ such that the matrix $W=[\begin{array}{cc} G^{\top} & H^{\top}\end{array}]^{\top}$
satisfies 
\begin{equation}
[\begin{array}{cc} W^{\#} & W \end{array}] \mathbb{K}_n \left[\begin{array}{c} W^{*} \\ W^{\top} \end{array}\right]=0, \label{eq:W-commutation}
\end{equation}
and $[\begin{array}{cc} W^{\#} & W \end{array}]$ is full rank.
\end{lemma}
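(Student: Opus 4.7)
The plan is to recast the lemma as a standard problem in symplectic linear algebra. First, I would unpack the two hypotheses in terms of the rows $g_1,\ldots,g_m$ of $G$. The identity $[G^{\#}\ G]\,\mathbb{K}_n\,[G^{*};G^{\top}]^{\top} = 0$ expands to $G^{\#}G^{\top} = GG^{*}$, which entrywise reads $g_i g_j^{*} \in \mathbb{R}$ for all $i,j$. The full-rank hypothesis on $[G^{\#}\ G]$ is equivalent to the rows of $G$ being $\mathbb{R}$-linearly independent: a complex combination $\sum_i \lambda_i [g_i^{\#}\ g_i] = 0$ gives both $\sum_i \lambda_i g_i = 0$ and $\sum_i \lambda_i^{*} g_i = 0$ (the second by conjugating the first block), which together force $\operatorname{Re}\lambda_i = \operatorname{Im}\lambda_i = 0$.

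Next, I would view $\mathbb{C}^n$ as a real vector space of dimension $2n$ equipped with the real symplectic form $\Omega(c_1,c_2) := \operatorname{Im}(c_1 c_2^{*})$. Under this identification, the first reformulation above says exactly that $S := \operatorname{span}_{\mathbb{R}}\{g_1,\ldots,g_m\}$ is an $m$-dimensional isotropic subspace of $(\mathbb{C}^n,\Omega)$. I would then invoke the classical Lagrangian-extension fact: every isotropic subspace of a symplectic space can be enlarged to a Lagrangian one. The inductive step is elementary — if $\dim_{\mathbb{R}} S < n$, the symplectic orthogonal $S^{\Omega}$ has real dimension $2n - \dim S > n > \dim S$, so $S \subsetneq S^{\Omega}$ strictly, and adjoining any vector $v \in S^{\Omega}\setminus S$ produces an isotropic subspace of dimension one higher. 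Iterating $n-m$ times yields an $n$-dimensional Lagrangian real subspace $T \supset S$.

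Finally, pick any $n-m$ vectors $h_1,\ldots,h_{n-m} \in T$ which together with $g_1,\ldots,g_m$ form a real basis of $T$, and let $H \in \mathbb{C}^{(n-m)\times n}$ be the matrix whose rows are the $h_i$. By construction the rows of $W = [G^{\top}\ H^{\top}]^{\top}$ are a real basis of $T$, hence $\mathbb{R}$-linearly independent, and the equivalence established in the first paragraph (applied to $W$ in place of $G$) then gives that $[W^{\#}\ W]$ is full rank. Isotropy of $T$ gives $\Omega(w_i,w_j) = 0$ for every pair of rows of $W$, which is exactly condition (\ref{eq:W-commutation}). The only nontrivial ingredient in this strategy is the Lagrangian-extension lemma, which is classical; the rest is just translating between the matrix and symplectic formulations, so I do not expect any substantive obstacle.
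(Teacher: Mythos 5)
Your proof is correct, and it is essentially the paper's argument in different coordinates: the paper passes to the quadratures $G+G^{\#}$ and $-\imath G+\imath G^{\#}$, which is exactly your identification of $\mathbb{C}^n$ with the real symplectic space $(\mathbb{R}^{2n},\mathbb{K}_n)$, and condition (\ref{eq:G-commutation}) becomes isotropy of the real row span in both treatments. The only difference is that where the paper outsources the completion step to the construction in the cited Lemma 6 of \cite{Nurd11} (see also \cite{WNZJ14}), you supply the classical inductive isotropic-to-Lagrangian extension yourself, which makes your version self-contained but is not a different route.
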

\begin{proof}
Let $Q^{\rm out} (t) = A^{\rm out}(t)  + A^{\rm out}(t)^{\#}$ and $P^{\rm out}(t)= -\imath A^{\rm out}(t)  + \imath A^{\rm out}(t)^{\#}$ be the amplitude and phase quadratures of $A^{\rm out}(t)$, respectively.
Then we can write $Y(t) = (G + G^{\#}) Q^{\rm out}(t) + (-\imath G+\imath G^{\#})P^{\rm out}(t)$.  
Since $G$ satisfies (\ref{eq:G-commutation}) direct calculation verifies that
$$
[\begin{array}{cc} G + G^{\#} & -\imath G+\imath G^{\#} \end{array}]
$$ 
satisfies
$$
[\begin{array}{cc} G + G^{\#} & -\imath G+\imath G^{\#} \end{array}]\mathbb{K}_n \left [\begin{array}{c} (G + G^{\#})^{\top} \\ (-\imath G+\imath G^{\#})^{\top} \end{array}\right]=0.
$$
Now, since $[\begin{array}{cc} G^{\#} & G \end{array}]$ is full rank, it follows from the construction employed in the proof of \cite[Lemma 6]{Nurd11} (see also the proof of \cite[Lemma 6]{WNZJ14}) that one can construct $H \in \mathbb{C}^{(n-m) \times n}$ such that  
$$
\left[\begin{array}{cc} G + G^{\#} & -\imath G+\imath G^{\#} \\  H + H^{\#}  & -\imath H+\imath H^{\#} \end{array}\right] \mathbb{K}_n \left [\begin{array}{cc} (G + G^{\#})^{\top} & (H + H^{\#})^{\top} \\ (-\imath G+\imath G^{\#})^{\top} & (-\imath H +\imath H^{\#})^{\top} \end{array}\right]=0,
$$
with 
$$
\left[\begin{array}{cc} G + G^{\#} & -\imath G+\imath G^{\#} \\  H + H^{\#}  & -\imath H+\imath H^{\#} \end{array}\right]
$$
full rank. It follows immediately from this that the matrix $[\begin{array}{cc} W^{\#} & W \end{array}]$ satisfies (\ref{eq:W-commutation}) and is also full rank by multiplying the matrix above on the right by the invertible matrix,
$$
\frac{1}{2}\left[ \begin{array}{cc} I_n & I_n  \\ -\imath I_n & \imath I_n \end{array} \right].
$$
\hfill $\Box$
\end{proof} 

\begin{remark}
As in Remark \ref{rem:pathology} for $G$, we shall also need to enforce a separate assumption that the $n \times n$ matrix $W=[\begin{array}{cc} G^{\top} & H^{\top}\end{array}]^{\top}$ is full rank, hence invertible.
\end{remark}
     
Using the above lemma, the following proposition may be proved in the same manner as Proposition \ref{pro:single-V}.
\begin{proposition}
\label{prop:multi-eq-unitary} Let $U(t)$ be the unitary solution of the QSDE (\ref{eq:HP-QSDE-vac}) and $Z$ be as given in (\ref{eq:Z-meas-vac}). Let $H$ and $W$ be as in Lemma \ref{lem:Xi-construct}, $L_W = W^{-\top}L$ if $m<n$ and $L_W=G^{-\top}L$ if $m=n$,  and
$Z_W(t) = W^{\#} A(t) + WA(t)^{\#}$ (with $W=G$ if $m=n$). Also, let $V(t)$ be an adapted process defined as the solution to the QSDE
$$
dV(t) = \left(-\bigl(\imath H+\nicefrac{1}{2}L^*L \bigr)dt +  L_W^{\top} d Z(t)\right)V(t),
$$
with $V(0)=I$. Then $[Z_W(t),Z(s)]=0$ for all $s,t \geq 0$ and
$$
\varpi(U(t)^* X U(t)) = \varpi(V(t)^* X V(t))
$$
for all $X \in \mathscr{B}(\mathfrak{h}_{\rm sys})$ and all $t \geq 0$.
\end{proposition}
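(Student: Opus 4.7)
The plan is to adapt the proof of Proposition~\ref{pro:single-V} to the MIMO setting, combining the bosonic CCR with the symplectic-type identity for $W$ supplied by Lemma~\ref{lem:Xi-construct}. Two independent claims must be established: the commutativity $[Z_W(t),Z(s)]=0$, and the state identity $\varpi(U(t)^{*}XU(t))=\varpi(V(t)^{*}XV(t))$ for all $X\in\mathscr{B}(\mathfrak{h}_{\mathrm{sys}})$. I would read the driving noise in the QSDE for $V$ as $dZ_W(t)$, which is forced by dimensions when $m<n$ and reduces to $dZ(t)$ when $m=n$; the defining QSDE is then the natural MIMO generalization of the single-input one.

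For the commutativity, I would expand the matrix commutator $[Z_W(t),Z(s)^{\top}]$ bilinearly using $Z_W=W^{\#}A+WA^{\#}$ and $Z=G^{\#}A+GA^{\#}$, together with $[A_j(t),A_k(s)]=0$ and $[A_j(t),A_k^{*}(s)]=\min(s,t)\delta_{jk}$. The four cross-terms collapse to $\min(s,t)(W^{\#}G^{\top}-WG^{*})$, so the task reduces to proving $W^{\#}G^{\top}=WG^{*}$. This I would extract from (\ref{eq:W-commutation}): that condition unpacks to $W^{\#}W^{\top}=WW^{*}$, and inserting the block decomposition $W=[G^{\top}\;H^{\top}]^{\top}$ and equating the first block column on both sides gives $G^{\#}G^{\top}=GG^{*}$ together with $H^{\#}G^{\top}=HG^{*}$, which stacked vertically is exactly $W^{\#}G^{\top}=WG^{*}$.

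For the state identity, I would invoke Holevo's trick exactly as in Proposition~\ref{pro:single-V}. Using $dA(t)|\Omega\rangle=0$ and $d\Lambda(t)|\Omega\rangle=0$, the QSDE (\ref{eq:HP-QSDE-vac}) on $|\eta\Omega\rangle$ collapses to $dU=(-(\imath H+\nicefrac{1}{2}L^{*}L)\,dt+dA^{*}L)U$. For $V$, expanding $L_W^{\top}dZ_W=L_W^{\top}W^{\#}dA+L_W^{\top}W\,dA^{\#}$ and discarding the annihilation term against the vacuum leaves the coefficient $L_W^{\top}W$ on the creation increment; substituting $L_W=W^{-\top}L$ (or $G^{-\top}L$ when $m=n$), using the invertibility of $W$ ensured by the remark following Lemma~\ref{lem:Xi-construct}, yields $L_W^{\top}W=L^{\top}W^{-1}W=L^{\top}$. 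Hence $dU$ and $dV$ act identically on $|\eta\Omega\rangle$, and the Holevo argument (see \cite[Lemma~6.2]{BvH06}) gives $U(t)|\eta\Omega\rangle=V(t)|\eta\Omega\rangle$ for every $|\eta\rangle\in\mathfrak{h}_{\mathrm{sys}}$; the state identity then follows by the same passage to density operators used in Proposition~\ref{pro:single-V}. The only nontrivial ingredient is the algebraic identity $W^{\#}G^{\top}=WG^{*}$, whose extraction requires both (\ref{eq:W-commutation}) and the block structure of $W$, and the invertibility of $W$ is essential for the $dA^{\#}$-coefficient matching when $m<n$; the rest is routine bilinear manipulation.
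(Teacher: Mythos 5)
Your proof is correct and follows essentially the same route as the paper's: the state identity via Holevo's trick, dropping the annihilation increments against the vacuum and using $L_W^{\top}W=L^{\top}$ (the paper writes this as the substitution chain $L^{\top}dA^{\#}\rightarrow L^{\top}W^{-1}dZ_W$), and the commutativity via the CCR reduced to $W^{\#}G^{\top}=WG^{*}$ from (\ref{eq:W-commutation}) (the paper instead shows $[Z_W(t),Z_W(s)]=0$ and notes $Z$ is a subvector of $Z_W$, which amounts to the same block identity). Your reading of the driving noise as $dZ_W(t)$ is also the one the paper's own proof implicitly adopts, since its substitution chain terminates at $L^{\top}W^{-1}dZ_W(t)$.
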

\begin{proof}
The proof follows {\em mutatis mutandis} from the proof of Proposition \ref{pro:single-V} with the following sequence of substitutions in the QSDE (justified as before by the fact that $|\Omega \rangle$ is a joint vacuum state of the fields),
\begin{eqnarray*}
L^{\top}dA(t)^{\#}-L^*dA(t) &\rightarrow&L^{\top}dA(t)^{\#},\\
&\rightarrow&  L^{\top} W^{-1} \left[\begin{array}{c} G \\ H \end{array}\right]dA(t)^{\#},\\
&\rightarrow&   L^{\top}W^{-1} \left[\begin{array}{c} G^{\#} \\ H^{\#} \end{array}\right]dA(t) + L^{\top}W^{-1} \left[\begin{array}{c} G \\ H \end{array}\right]dA(t)^{\#},\\
&\rightarrow&  L^{\top} W^{-1}dZ_{W}(t).
\end{eqnarray*}
Finally, note that  $[Z_W(t),Z_W(s)]=0$ for all $s,t\geq 0$ can be shown along the lines of the proof of Lemma \ref{lem:non-demo-X}. Moreover, we note that by definition, $Z$ is a subvector of $Z_W$. Hence, it must be that  $[Z_W(t),Z(s)]=0$ for all $s,t \geq 0$. \hfill $\Box$
\end{proof}

With the above proposition we obtain the analog of Theorem \ref{thm:filter-single-in} for the MIMO case, proved by performing similar calculations.
\begin{theorem}
\label{thm:filter-multi-in} Let $\varpi(Z_W(t) | \mathscr{Z}_t)=K_W Z(t)$ with $K_W \in \mathbb{R}^{n \times m}$. The quantum Zakai equation for the system (\ref{eq:HP-QSDE-vac}) with measurement $Y$ given by (\ref{eq:Y-meas}) is
\begin{eqnarray}
d\sigma_t(X) = \sigma_t(\mathcal{L}_{H,L}(X)) dt + \sigma_t(L_W^*X+XL_W^{\top})K_W dY(t), \label{eq:Zakai-multi-vac}
\end{eqnarray}
and the quantum Kushner-Stratonovich equation for the conditional expectation $\pi_t(X)$ is
\begin{eqnarray}
d\pi_t(X) &=& \pi_t(\mathcal{L}_{H,L}(X))dt + (\pi_t(L_W^*X+XL_W^{\top}) -\pi_t(X)\pi_t(L_W^*+L_W^{\top})) K_W d\nu(t) \label{eq:KS-multi-vac}
\end{eqnarray}
where $\nu$ is the innovations process and is a $\mathscr{Y}_t$-martingale defined by
$$
\nu(t)=\int_{0}^t (dY(s)-G^{\#}G^{\top}K_W^{\top} \pi_s(L_W^{\#}+L_W)ds). 
$$  
Moreover, $\nu$ is a Wiener process.
\end{theorem}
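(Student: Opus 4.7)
The plan is to follow the template of the proof of Theorem~\ref{thm:filter-single-in}, with matrix bookkeeping replacing the scalar $g$. From Proposition~\ref{prop:multi-eq-unitary} together with the reference probability theorem, $\sigma_t(X)=U(t)^{\ast}\varpi(V(t)^{\ast}XV(t)|\mathscr{Z}_t)U(t)$, so the main task is to obtain a QSDE for $\varpi(V^{\ast}XV|\mathscr{Z}_t)$ and then conjugate by $U(\cdot)U^{\ast}$.

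I would first apply the quantum It\^o rule to $d(V^{\ast}XV)$ using the QSDE for $V$ driven by $L^{\top}W^{-1}dZ_W$. The drifts from $(dV^{\ast})XV$ and $V^{\ast}X(dV)$ yield the Hamiltonian term $-\imath[X,H]$ and the anticommutator $-\frac{1}{2}(L^{\ast}LX+XL^{\ast}L)$ pieces of $\mathcal{L}_{H,L}(X)$. The crucial step is the It\^o correction $(dV^{\ast})X(dV)=V^{\ast}dZ_W^{\ast}W^{-\ast}L^{\#}XL^{\top}W^{-1}dZ_WV$: a direct application of the vacuum It\^o table yields $(dZ_W)_j^{\ast}(dZ_W)_k=(W^{\#}W^{\top})_{jk}dt$, and a cyclic-trace manipulation using $W^{\ast}W^{-\ast}=I$ collapses this to $V^{\ast}L^{\ast}XLV\,dt$, restoring the Lindblad dissipator and completing $\mathcal{L}_{H,L}(X)$ in the drift. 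The residual first-order terms $V^{\ast}(XL^{\top}W^{-1}dZ_W+dZ_W^{\ast}W^{-\ast}L^{\#}X)V$ are naturally packaged on recognising, as $1\times n$ rows of system operators, $XL^{\top}W^{-1}=XL_W^{\top}$ and $(W^{-\ast}L^{\#}X)^{\top}=L_W^{\ast}X$.

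Next I would take $\varpi(\cdot|\mathscr{Z}_t)$. Adaptedness of $V(t)$ and $X$ up to time $t$, independence of the future vacuum increment $dZ_W(t)$ from the past algebra, and the hypothesis $\varpi(Z_W(t)|\mathscr{Z}_t)=K_WZ(t)$ together yield
\[
d\varpi(V^{\ast}XV|\mathscr{Z}_t)=\varpi(V^{\ast}\mathcal{L}_{H,L}(X)V|\mathscr{Z}_t)dt+\varpi(V^{\ast}(L_W^{\ast}X+XL_W^{\top})V|\mathscr{Z}_t)K_W\,dZ(t).
\]
I would then apply the quantum It\^o rule to $d[U^{\ast}\varpi(V^{\ast}XV|\mathscr{Z}_t)U]$. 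As in the single-input proof, cross terms between the $dA$, $dA^{\ast}$, $d\Lambda$ content of $dU$ and the $dZ$-content of the bracket combine with the $dZ$-drift to replace $dZ$ by $dY$ via $U^{\ast}Z(t)U=Y(t)$ (a consequence of $A^{\mathrm{out}}=U^{\ast}AU$), while the $d\Lambda$-dependent cross terms cancel by direct inspection; this delivers the Zakai equation~(\ref{eq:Zakai-multi-vac}).

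The Kushner-Stratonovich equation~(\ref{eq:KS-multi-vac}) then follows by the classical It\^o quotient rule applied to $\pi_t(X)=\sigma_t(X)/\sigma_t(I)$ inside the commutative algebra $\mathscr{Y}_t$, with the residual drift absorbed into the innovations differential $d\nu=dY-G^{\#}G^{\top}K_W^{\top}\pi_s(L_W^{\#}+L_W)ds$. The $\mathscr{Y}_t$-martingale property of $\nu$ is then immediate from the normalization of $\pi_s$ (equivalent to $\varpi(d\nu|\mathscr{Y}_s)=0$) once one verifies that the drift of $dY$ projected onto $\mathscr{Y}_s$ simplifies to the claimed expression via the same algebraic identities used above, and the Wiener property follows from L\'evy's theorem after computing $d\nu\,d\nu^{\top}=I_m\,dt$ from the It\^o table, in exact parallel with \cite[Theorem 7.1]{BvHJ07}. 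The main obstacles are, first, verifying the specific matrix identity that collapses the It\^o correction $V^{\ast}dZ_W^{\ast}W^{-\ast}L^{\#}XL^{\top}W^{-1}dZ_WV$ to $V^{\ast}L^{\ast}XLV\,dt$, and second, rigorously justifying the conditional-expectation identity $\varpi(V^{\ast}F\,dZ_WV|\mathscr{Z}_t)=\varpi(V^{\ast}FV|\mathscr{Z}_t)K_W\,dZ$ for coefficients $F$ adapted to the algebra generated by $Z_W$, which relies on the fact that in the vacuum state $Z_W$ is a classical $\mathbb{R}^n$-valued Brownian motion on the commutative algebra it generates.
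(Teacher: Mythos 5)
Your proposal is correct and follows essentially the same route as the paper's proof: quantum It\^o expansion of $V^{*}XV$ (with the It\^o correction collapsing to $L^{*}XL$ via $W^{-\#}W^{\#}W^{\top}W^{-\top}=I$), conditioning on $\mathscr{Z}_t$ through the decomposition $Z_W=K_WZ+E$ with $E$ a Wiener process independent of $Z$, conjugation by $U$ to convert $dZ$ into $dY$, and the classical It\^o quotient rule for the Kushner--Stratonovich equation. The only slip is the quadratic variation of the innovations: the It\^o table gives $d\nu\,d\nu^{\top}=G^{\#}G^{\top}\,dt=GG^{*}\,dt$ rather than $I_m\,dt$, so $\nu$ is a Wiener process with covariance $GG^{*}t$ (consistent with the single-input case, where $d\nu^2=|g|^2\,dt$); this does not affect the rest of the argument.
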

\begin{remark}
If $X$ is a matrix of compatible operators then $\sigma_t(X)=[\sigma_t(X_{jk})]$. Similarly, $\pi_t(X)=[\pi_t(X_{jk})]$.
\end{remark}
\begin{proof}
By a similar calculation to the proof of Theorem \ref{thm:filter-single-in}, we have that
\begin{eqnarray*}
\varpi(V(t)^*XV(t) \mid \mathscr{Z}_t)&=& \int_{0}^t   \varpi\bigl(dV(s)^*XV(s) + V(t)^* X dV(s) + dV(s)^*XdV(s) \bigr| \mathscr{Z}_s\bigr),\\
&=& \int_{0}^t  \varpi(V(s)^*\mathcal{L}(X)V(s) \mid \mathscr{Z}_{s})ds  \\
&&\quad +  \int_{0}^t \varpi \bigl(V(s)^*\bigl(L_W^*X + XL_W^{\top}\bigr)V(s) d Z_W(s) \mid  \mathscr{Z}_s \bigr),
\end{eqnarray*}
where $L_W$ is as defined in Proposition \ref{prop:multi-eq-unitary}. Now, recall from the proposition that the components of $Z$ commute with one another and with those in $Z_W$ at any two arbitrary times $s,t \geq 0$. Moreover, $Z$ is a subvector of $Z_W$. Therefore, $Z$ and $Z_W$ are isomorphic to two correlated classical vector Wiener processes and hence  there is an optimal  estimate $\hat{Z}(t) = \varpi(Z_W(t)| \mathscr{Z}_t)$ of $Z_W(t)$ given $\mathscr{Z}_t$. Note further, however, that being Wiener processes, $Z_W$ and $Z$ are $\mathscr{Z}_t$-martingales, with $Z$ being a subvector of $Z_W$, implying that $\varpi(Z_W(t)| \mathscr{Z}_t)=K_W Z(t)$
for some constant real $n \times m$ matrix $K_W$ (with the upper $m \times m $ block of $K_W$ obviously being $I_{m \times m}$), and one can write $Z_W(t) = \hat{Z}(t) +E(t) = K_WZ(t) +E(t)$, where $E(t)$ is another  vector Wiener process that is independent of $Z(t)$. Using this decomposition it follows that \cite{BvH06}
\begin{eqnarray*}
\varpi(V(t)^*XV(t) \mid \mathscr{Z}_t) &=& \int_{0}^t  \varpi(V(s)^*\mathcal{L}(X)V(s) \mid \mathscr{Z}_{s})ds  \\
&&\quad +  \int_{0}^t \varpi(V(s)^*(L_W^* X+ XL_W^{\top}\bigr)V(s)\bigr|\mathscr{Z}_s ) K_W  d Z_{s} ,
\end{eqnarray*}
Using the above expression for $d\varpi(V(t)^*XV(t)|\mathscr{Z}_t)$ together with the quantum It\={o} rule, yields, by  similar calculations to that in the proof of Theorem \ref{thm:filter-single-in}, the quantum Zakai equation (\ref{eq:Zakai-multi-vac}) and the quantum Kushner-Stratonovich equation (\ref{eq:KS-multi-vac}).  \hfill $\Box$
\end{proof}

\begin{remark}
A few remarks are now in order.
\begin{enumerate}
\item Once $W$ has been constructed, $K_W$ can be easily computed by the conditional expectation formula for jointly Gaussian random variables (since $Z_W(t)$ and $Z(t)$ are jointly Gaussian at any time $t \geq 0$).

\item Note that by (\ref{eq:G-commutation}), we have that $GG^*=G^{\#}G^{\top}$ (i.e., $GG^*$ is real and symmetric) so that the term $G^{\#}G^{\top}K_W^{\top}$ appearing under the stochastic integral for $\nu$ is real since $K_W$ is real.

\item When $m=n$ we have that $K_W=I$. Using the fact that $GG^*=G^{\#}G^{\top} \Leftrightarrow G=G^{\#}G^{\top}G^{-*}$, $\nu$ takes the form
$$
\nu(t)=\int_{0}^t (dY(s)- \pi_s(G^{\#}L+ GL^{\#})ds), 
$$ 
as would be expected.
\end{enumerate}
\end{remark}

We can write $\pi_t(X)={\rm Tr}(\hat{\rho}(t)X)$ for some stochastic system density operator $\hat{\rho}(t) \in \mathscr{Y}_t$ and all $X \in \mathscr{B}(\mathfrak{h}_{\rm sys})$. From the  SDE for $\pi_t(X)$, we can immediately obtain the density operator-valued SDE for $\hat{\rho(t)}$
as:
\begin{equation}
d\hat{\rho}(t) = \mathcal{L}_{H,L}^{\star}(\hat{\rho}(t))dt +\bigl( \hat{\rho}(t) L_W^* +L_W^{\top}\hat{\rho}(t)-{\rm Tr}(\hat{\rho}(t) (L_W^*+L_W^{\top}))\hat{\rho}(t)\bigr)K_W d\nu(t), \label{eq:SME-multi-vac}
\end{equation}
where $\mathcal{L}_{H,L}^{\star}$ is the Liouvillian superoperator
\begin{equation}
\mathcal{L}_{H,L}^{\star}(\rho) = \imath [\rho,H]+L\rho L^*-\nicefrac{1}{2}(L^*L \rho+\rho L^*L). \label{eq:ME-multi-vac}
\end{equation}
It is the SDE for $\hat{\rho}(t)$ that is actually referred to as the stochastic master equation in the quantum optics literature. The density operator $\hat{\rho}$  can be interpreted as the stochastic density operator of the system as it evolves under continuous observation of $Y(t)$.

Let $\rho_{\rm sys}$ denotes the reduced density operator of the system only, after tracing out the fields to which it is coupled, i.e., $\rho_{\rm red}(t)={\rm Tr}_{\mathcal{F}_n}(U(t) \rho_{\rm sys} |\Omega \rangle \langle \Omega| U(t)^*)$. This master equation can be computed in the standard way via the identity ${\rm Tr}(\rho_{\rm red}(t)X)=\varpi(j_t(X))$ for any $X \in \mathscr{B}(\mathfrak{h}_{\rm sys})$, and the fact that the fields are in a vacuum state; see, e.g., \cite{GJ07}. For the systems considered in this section, we obtain the so-called quantum master equation,
\begin{equation}
\dot{\rho}_{\rm red}(t) = \mathcal{L}_{H,L}^{\star} \rho_{\rm red}(t), \label{eq:master-multi-vac}
\end{equation}
The master equation can also be obtained directly from the quantum filtering equation.  The reduced density operator of the system, $\rho_{\rm red}$, is related to $\hat{\rho}$ via expectation, $\rho_{\rm red}(t)=\varpi(\hat{\rho}(t))$. That is, the reduced density operator of the system at time $t \geq 0$ can be recovered from the  stochastic density operator  $\hat{\rho}$ by averaging the latter over all possible stochastic trajectories induced by continuous measurement of $Y(t)$. By the $\mathscr{Y}_t$-martingale property of the innovation process $\nu$, it is easy to see that computing $d\varpi(\hat{\rho}(t))$ again yields the master equation  (\ref{eq:master-multi-vac}).

\subsection{Quantum filtering for MIMO systems driven by arbitrary zero-mean jointly Gaussian fields}
\label{sec:filtering-Gaussian}
We now finally apply the results of Section \ref{sec:filtering-multi-vac} to systems driven by field in an arbitrary zero-mean jointly Gaussian state under linear measurements made on the system's multiple output fields. Suppose that the system's initial state is $\omega_{\rm sys}(\cdot) ={\rm Tr}(\rho_{\rm sys} \cdot)$ for some density operator $\rho$ in $\mathscr{B}(\mathfrak{h}_{\rm sys})$, and the fields to which it is coupled to is in the jointly Gaussian state $\omega_{N,M}$, with $N,M$ the parameters of the Gaussian state introduced in Section \ref{sec:Gaussian-states}. 
Thus, the joint initial state of the system and fields is $\varpi(\cdot)=\omega_{\rm sys} \otimes \omega_{N,M}$.  Using the generalized Araki-Woods representation (\ref{eq:GAW}) with an appropriate vacuum field state  $|\Omega \rangle$ (i.e., $A_j(f)|\Omega \rangle=0$ for $j=1,2$) on the underlying field Fock state $\mathcal{F}_{2n}$, the state reads 
\begin{eqnarray*}
\varpi(X) = {\rm Tr}( \rho_{\rm sys}\otimes |\Omega\rangle \langle \Omega | X), 
\end{eqnarray*}
for any operator $X$ in $\mathscr{B}(\mathfrak{h}_{\rm sys}) \otimes \mathscr{B}(\mathcal{F}_{2n})$. Consider the vacuum QSDE (\ref{eq:HP-QSDE-nonvac2}) using the generalized Araki-Woods representation of the Gaussian input fields.  The output field is $B^{\rm out}(t)=U(t)^*B(t)U(t)$ and the measurement is of the form 
\begin{equation}
Y(t) = G^{\#} B^{\rm out}(t) + G B^{\rm out}(t)^{\#}, \label{eq:Y-meas-nonvac1} 
\end{equation}
with $[\begin{array}{cc} G^{\#} & G \end{array}]$ assumed to be full rank, as in the vacuum case. Again, using (\ref{eq:GAW}), this can be expressed as 
\begin{equation}
Y(t) = \tilde G^{\#} A^{\rm out}(t) + \tilde G A^{\rm out}(t)^{\#}, \label{eq:Y-meas-nonvac2} 
\end{equation}
for a corresponding $m \times 2n$ matrix $\tilde G$ that is determined by $G$ and the coefficients $C_1,C_2,C_3$ of the generalized Araki-Woods representation. Note that the assumption that  $[\begin{array}{cc} G^{\#} & G \end{array}]$  is full rank implies that $[\begin{array}{cc} \tilde G^{\#} & \tilde G \end{array}]$ must also be full rank. Similarly, we can define $Z$ as
\begin{eqnarray*} 
Z(t) &=& G^{\#} B(t) + G B(t)^{\#},\\
&=& \tilde G^{\#} A(t) + \tilde G A(t)^{\#},
\end{eqnarray*}
with $A(t)=[\begin{array}{cc} A_1(t)^{\top} & A_2(t)^{\top} \end{array}]^{\top}$. Let $\tilde{H} \in \mathbb{C}^{(2n-m)\times 2n}$ and $\tilde{W} \in \mathbb{C}^{2n \times 2n}$ be constructed from $\tilde G$ in the same way that the matrices $H$ and $W$ were constructed from $G$ in Section \ref{sec:filtering-multi-vac}. Moreover,   assume that $\tilde W$ from this construction is invertible. With the definitions that have just been set up, the following proposition follows directly from Proposition \ref{prop:multi-eq-unitary} by setting $S=I$.

\begin{proposition}
Let $U(t)$ be the unitary solution of the QSDE (\ref{eq:HP-QSDE-nonvac2}). Define 
$L_{\tilde{W}} = \tilde{W}^{-\top} L_{N,M}$, and  
$ 
Z_{\tilde W}(t) = \tilde W^{\#} A(t) + \tilde W  A(t)^{\#}. 
$  
Also, let $V(t)$ be an adapted process defined as the solution to the QSDE
$$
dV(t) = \bigl(-\bigl(\imath H+\nicefrac{1}{2}L_{N,M}^*L_{N,M} \bigr)dt +  L_{\tilde W}^{\top} dZ_{\tilde W}(t)\bigr)V(t),
$$
with $V(0)=I$. Then $[Z_{\tilde W}(t),Z(s)]=0$ for all $s,t \geq 0$ and
$$
\varpi(U(t)^* X U(t)) = \varpi(V(t)^* X V(t))
$$
for all $X \in \mathscr{B}(\mathfrak{h}_{\rm sys})$.
\end{proposition}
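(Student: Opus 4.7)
The plan is to reduce this statement to the vacuum MIMO result of Proposition \ref{prop:multi-eq-unitary} by observing that the QSDE (\ref{eq:HP-QSDE-nonvac2}) is, after using the generalized Araki--Woods representation, a Hudson--Parthasarathy QSDE for a $2n$-dimensional vacuum field $A=[A_1^\top \; A_2^\top]^\top$ with coupling vector $L_{N,M}$, Hamiltonian $H$, and gauge coefficient $S=I$ (so no $d\Lambda$ term appears). In the same representation, the observation is $Y(t) = \tilde G^{\#} A^{\rm out}(t) + \tilde G A^{\rm out}(t)^{\#}$, which is of the form (\ref{eq:Y-meas}) with $\tilde G$ playing the role of $G$ for a $2n$-dimensional vacuum driving field.

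To invoke Proposition \ref{prop:multi-eq-unitary}, the first step is to check that $\tilde G$ satisfies the hypotheses imposed there on $G$, namely full rankness of $[\tilde G^{\#}\;\tilde G]$ and the commutation identity (\ref{eq:G-commutation}) with $n$ replaced by $2n$. Full rankness is already noted in the excerpt. For the commutation identity, I would argue as follows: because the integrated Gaussian field operators $B_j(f)$ satisfy the canonical commutation relations $[B_j(f),B_k^*(g)]=\delta_{jk}\int f^*g$ regardless of the field state, the original assumption (\ref{eq:G-commutation}) on $G$ together with a direct calculation yields $[Z(t),Z(s)]=0$ for all $s,t\geq 0$. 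Rewriting $Z(t)=\tilde G^{\#}A(t)+\tilde G A(t)^{\#}$ in the vacuum representation and expanding this commutator using the vacuum CCR for $A$ forces
\[
[\tilde G^{\#}\;\tilde G]\,\mathbb{K}_{2n}\left[\begin{array}{c}\tilde G^{*}\\ \tilde G^{\top}\end{array}\right]=0,
\]
which is exactly the required condition on $\tilde G$.

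With this in hand, Lemma \ref{lem:Xi-construct} (applied with $n\leftrightarrow 2n$ and $G\leftrightarrow \tilde G$) produces matrices $\tilde H$ and $\tilde W$ with $[\tilde W^{\#}\;\tilde W]$ full rank and satisfying the analog of (\ref{eq:W-commutation}); we additionally impose the standing invertibility assumption on $\tilde W$. The definitions $L_{\tilde W}=\tilde W^{-\top}L_{N,M}$ and $Z_{\tilde W}(t)=\tilde W^{\#}A(t)+\tilde W A(t)^{\#}$ are then precisely the $2n$-dimensional vacuum analogs of $L_W$ and $Z_W$ appearing in Proposition \ref{prop:multi-eq-unitary}. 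Setting $S=I$ in that proposition and performing the obvious relabeling delivers both conclusions: the commutation $[Z_{\tilde W}(t),Z(s)]=0$ for all $s,t\geq 0$ (from the $[Z_W(t),Z(s)]=0$ statement there), and the identity of expectations $\varpi(U(t)^{*}XU(t))=\varpi(V(t)^{*}XV(t))$ for all $X\in\mathscr{B}(\mathfrak{h}_{\rm sys})$ via the Holevo trick applied to the joint vacuum state $|\Omega\rangle$ of the auxiliary $2n$-mode Fock space.

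I do not anticipate a substantive obstacle; the only point requiring care is the verification that $\tilde G$ inherits the commutation identity from $G$, which as described follows automatically because the algebraic relation $[Z(t),Z(s)]=0$ is state-independent and thus must hold equally in the vacuum representation of the Araki--Woods dilation. Once that is noted, the proposition is a direct corollary of Proposition \ref{prop:multi-eq-unitary}.
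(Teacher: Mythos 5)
Your proposal is correct and follows essentially the same route as the paper, which likewise obtains the result directly from Proposition \ref{prop:multi-eq-unitary} with $S=I$ after passing to the $2n$-dimensional vacuum representation. Your explicit check that $\tilde G$ inherits the commutation condition (\ref{eq:G-commutation}) from $G$ (via the state-independence of $[Z(t),Z(s)]=0$) is a detail the paper leaves implicit, and it is argued correctly.
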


The following theorem then gives an explicit SDE for the quantum Zakai equation and the quantum Kushner-Stratonovich equation. It  follows from similar calculations as in the proof of Theorems \ref{thm:filter-single-in} and \ref{thm:filter-multi-in}, using the proposition above  and the equivalent vacuum QSDE (\ref{eq:HP-QSDE-nonvac2}) and  equivalent vacuum measurement equation (\ref{eq:Y-meas-nonvac2}).
\begin{theorem}
\label{thm:filter-nonvac-Gaussian} Let $\varpi(Z_{\tilde W}(t) | \mathscr{Z}_t)=K_{\tilde W} Z(t)$ with $K_{\tilde W} \in \mathbb{R}^{2n \times m}$. The quantum Zakai equation for the system (\ref{eq:HP-QSDE-nonvac1}) with measurement $Y$ given by (\ref{eq:Y-meas-nonvac1}) is
\begin{eqnarray}
d\sigma_t(X) = \sigma_t(\mathcal{L}_{H,L_{N,M}}(X)) dt + \sigma_t(L_{\tilde W}^*X+XL_{\tilde W}^{\top})K_{\tilde W} dY(t). \label{eq:Zakai-nonvac}
\end{eqnarray}
and the quantum Kushner-Stratonovich equation for the conditional expectation $\pi_t(X)$ is
\begin{eqnarray*}
d\pi_t(X) &=& \pi_t(\mathcal{L}_{H,L_{N,M}}(X))dt + (\pi_t(L_{\tilde W}^*X+XL_{\tilde W}^{\top}) -\pi_t(X)\pi_t(L_{\tilde W}^*+L_{\tilde W}^{\top})) K_{\tilde W} d\nu(t) \label{eq:KS-nonvac}
\end{eqnarray*}
where $\nu$ is the innovations process and is a $\mathscr{Y}_t$-martingale defined by
$$
\nu(t)=\int_{0}^t (dY(s)-\tilde G^{\#}\tilde G^{\top}K_{\tilde W}^{\top} \pi_s(L_{\tilde W}^{\#}+L_{\tilde W})ds). 
$$  
Moreover, $\nu$ is a Wiener process.
\end{theorem}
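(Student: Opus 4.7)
The plan is to reduce the Gaussian-state filtering problem to the vacuum MIMO problem already handled in Theorem \ref{thm:filter-multi-in}, and then re-express everything back in terms of the Gaussian-state observables. The key idea is that the generalized Araki-Woods representation (\ref{eq:GAW}) makes the unitary evolution (\ref{eq:HP-QSDE-nonvac2}) look formally like a Hudson-Parthasarathy QSDE driven by a joint vacuum state on $\mathcal{F}_{2n}$, with effective coupling operator $L_{N,M}$. The effective measurement (\ref{eq:Y-meas-nonvac2}), with the $m \times 2n$ measurement matrix $\tilde G$, and the effective reference process $Z_{\tilde W}$ of dimension $2n$, are constructed so that the structural hypotheses required by Theorem \ref{thm:filter-multi-in} (full-rankness of $[\tilde G^{\#}\;\tilde G]$ and, by the separate assumption, invertibility of $\tilde W$) are in force.

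First, I would apply the preceding proposition to produce the reference process $V(t)$ satisfying $\varpi(U(t)^*XU(t))=\varpi(V(t)^*XV(t))$ and $[Z_{\tilde W}(t),Z(s)]=0$. Then I would compute $d\varpi(V(t)^*XV(t)\mid\mathscr{Z}_t)$ using the quantum It\^o rule, exactly as in the proof of Theorem \ref{thm:filter-multi-in}: the drift contributes $\varpi(V(t)^*\mathcal{L}_{H,L_{N,M}}(X)V(t)\mid\mathscr{Z}_t)dt$, and the diffusion contributes $\varpi(V(t)^*(L_{\tilde W}^*X+XL_{\tilde W}^{\top})V(t)\mid\mathscr{Z}_t)\,dZ_{\tilde W}(t)$. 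Crucially, because $Z$ is a subvector of the vacuum-driven vector Wiener process $Z_{\tilde W}$ and the two processes mutually commute, the pair $(Z_{\tilde W},Z)$ is isomorphic to a classical jointly Gaussian pair. Hence $\varpi(Z_{\tilde W}(t)\mid\mathscr{Z}_t)=K_{\tilde W}Z(t)$ for a constant real $2n\times m$ matrix $K_{\tilde W}$ computable by the classical Gaussian conditional-expectation formula, and $dZ_{\tilde W}$ under the filtration can be replaced by $K_{\tilde W}\,dZ$ plus a conditionally orthogonal innovation that drops out after conditioning.

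Next, I would apply $U(t)^*\cdot U(t)$ to obtain $\sigma_t(X)$ and use the full quantum It\^o rule once more, combining $dU(t)$ from (\ref{eq:HP-QSDE-nonvac2}), $dU(t)^*$, and $d\varpi(V(t)^*XV(t)\mid\mathscr{Z}_t)$. As in the proof of Theorem \ref{thm:filter-single-in}, the contributions involving the $A^{\#}$ and $A$ differentials of $dU(t)^*$ combine with the $dZ_{\tilde W}$ piece so that the free vacuum differentials $dA(t),dA(t)^*$ collapse into $dY(t)=\tilde G^{\#}\,dA^{\rm out}(t)+\tilde G\,dA^{\rm out}(t)^{\#}$, producing the quantum Zakai equation (\ref{eq:Zakai-nonvac}). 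The Kushner-Stratonovich equation then follows by writing $\pi_t(X)=\sigma_t(X)/\sigma_t(I)$ and applying the classical It\^o quotient rule inside the commutative von Neumann algebra $\mathscr{Y}_t$, using $\sigma_t(I)$ as the reference normalizer. The identification $dY(t)-\tilde G^{\#}\tilde G^{\top}K_{\tilde W}^{\top}\pi_t(L_{\tilde W}^{\#}+L_{\tilde W})dt=d\nu(t)$ emerges from computing $\pi_t(L_{\tilde W}^*+L_{\tilde W}^{\top})K_{\tilde W}$ against $dY$ and using the commutation identity $GG^*=G^{\#}G^{\top}$ (lifted to $\tilde G$) noted in the remark following Theorem \ref{thm:filter-multi-in}.

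The main obstacle I expect is the bookkeeping around $K_{\tilde W}$ and $\tilde W$: because $Z_{\tilde W}$ lives in a $2n$-dimensional effective vacuum Fock space while the physical measurement $Z$ is only $m$-dimensional, one must carefully track which components of $Z_{\tilde W}$ are projected onto $\mathscr{Z}_t$ and verify that the resulting innovation $\nu$ is genuinely an $\mathscr{Y}_t$-martingale with bracket $\langle\nu,\nu^{\top}\rangle_t=\tilde G^{\#}\tilde G^{\top}t\cdot I$ needed to invoke L\'evy's characterization. The martingale/Wiener-process claim for $\nu$ then follows verbatim from the argument in \cite[Theorem 7.1]{BvHJ07}, exactly as invoked in Theorems \ref{thm:filter-single-in} and \ref{thm:filter-multi-in}. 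Once these pieces are assembled, the theorem reduces to a direct translation of the MIMO vacuum result through the generalized Araki-Woods dictionary.
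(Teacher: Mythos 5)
Your proposal matches the paper's own argument: the paper likewise proves Theorem \ref{thm:filter-nonvac-Gaussian} by passing through the generalized Araki-Woods representation to the equivalent vacuum QSDE (\ref{eq:HP-QSDE-nonvac2}) and measurement (\ref{eq:Y-meas-nonvac2}), invoking the preceding proposition for $V(t)$ and $Z_{\tilde W}$, and then repeating the calculations of Theorems \ref{thm:filter-single-in} and \ref{thm:filter-multi-in} with $L_{N,M}$, $L_{\tilde W}$, $K_{\tilde W}$ in place of $L$, $L_W$, $K_W$. The steps you outline, including the decomposition $Z_{\tilde W}=K_{\tilde W}Z+E$ and the It\^o quotient for $\pi_t(X)=\sigma_t(X)/\sigma_t(I)$, are exactly the ones the paper relies on.
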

  
Also, following Section \ref{sec:filtering-multi-vac}, the stochastic master equation for the stochastic density operator $\hat{\rho}(t)$ is
\begin{equation}
d\hat{\rho}(t) = \mathcal{L}_{H,L_{N,M}}^{\star}(\hat{\rho}(t))dt +\bigl( \hat{\rho}(t) L_{\tilde W}^* +L_{\tilde W}^{\top}\hat{\rho}(t)-{\rm Tr}(\hat{\rho}(t) (L_{\tilde W}^*+L_{\tilde W}^{\top}))\hat{\rho}(t)\bigr)K_{\tilde W} d\nu(t), \label{eq:SME-nonvac}
\end{equation}
and the quantum master equation is 
\begin{equation}
\dot{\rho}_{\rm red}(t) = \mathcal{L}_{H,L_{N,M}}^{\star} \rho_{\rm red}(t). \label{eq:master-nonvac}
\end{equation}

\section{Conclusion}
\label{sec:conclu}
Using the reference probability approach to quantum filtering, in this paper  we have derived the most general form of the quantum filtering equation for Markovian open quantum systems coupled to multiple fields in an arbitrary zero-mean jointly Gaussian state with general linear measurements performed on multiple output fields from the system. This completes, up to a mild assumption relating to the rank of certain matrices relating to the linear measurements, the partial results that are available in the literature for various special classes of zero-mean Gaussian states and under specific kinds of linear measurements.  It is hoped that these general results will be useful for various applications of quantum filtering theory involving Gaussian input states, especially in the context of quantum feedback control and estimation theory.

\bibliographystyle{ieeetran}
\bibliography{ieeeabrv,rip,mjbib2004,irpnew}

\end{document}